\providecommand{\U}[1]{\protect\rule{.1in}{.1in}}
\newtheorem{theorem}{Theorem}
\newtheorem{corollary}[theorem]{Corollary}
\newtheorem{example}[theorem]{Example}
\newtheorem{lemma}[theorem]{Lemma}
\newtheorem{proposition}[theorem]{Proposition}
\newtheorem{remark}[theorem]{Remark}
\newenvironment{proof}[1][Proof]{\noindent\textbf{#1.} }{\ \rule{0.5em}{0.5em}}
\begin{document}

\title{A refinement of the Robertson-Schr\"odinger uncertainty principle and a Hirschman-Shannon inequality for Wigner distributions}
\author{Nuno Costa Dias\textbf{\thanks{ncdias@meo.pt}}
\and Maurice A. de Gosson\textbf{\thanks{maurice.de.gosson@univie.ac.at}}
\and Jo\~{a}o Nuno Prata\textbf{\thanks{joao.prata@mail.telepac.pt }}}
\maketitle

\begin{abstract}
We propose a refinement of the Robertson-Schrodinger uncertainty principle
(RSUP) using Wigner distributions. This new principle is stronger than the RSUP.
In particular, and unlike the RSUP, which can be saturated by many phase space functions,
the refined RSUP can be saturated by pure Gaussian Wigner
functions only. Moreover, the new principle is technically as
simple as the standard RSUP. In addition, it makes a direct
connection with modern harmonic analysis, since it involves the
Wigner transform and its symplectic Fourier transform, which is
the radar ambiguity function.

As a by-product of the refined RSUP, we derive inequalities involving the entropy and the covariance matrix of Wigner distributions. These inequalities refine the Shanon and the Hirschman inequalities for the Wigner distribution of a mixed quantum state $\rho$. We prove sharp
estimates which critically depend on the purity of $\rho$ and which are
saturated in the Gaussian case.
\end{abstract}

\section{Introduction}

In quantum mechanics, the state of a system is represented by a positive trace class operator with unit trace - called a density matrix - acting on a separable Hilbert space $\mathcal{H}$. We denote the set of density matrices - the set of states - by $\mathcal{S}(\mathcal{H})$. Given some trace class operator $\widehat{\rho}$, it is in general very difficult to assess whether $ \widehat{\rho} \in \mathcal{S}(\mathcal{H})$. The main difficulty resides in the verification of the positivity condition:
\begin{equation}
(f | \widehat{\rho}  f )_{\mathcal{H}} \ge 0,
\label{eqIntroduction1}
\end{equation}
for all $f \in \mathcal{H}$. This is particularly difficult in infinite dimensional Hilbert spaces. In this work we shall be concerned with the case $\mathcal{H}=L^2 (\mathbb{R}^n)$.

A very useful representation of density matrices, which casts position and momentum variables on equal footing and is akin to a classical probability density, is the Wigner distribution \cite{Wigner}. It is obtained from $\widehat{\rho}$ by way of the Weyl transform \cite{Birk,Wong}:
\begin{equation}
\widehat{\rho} \mapsto W \rho (x,p) = \frac{1}{(2 \pi \hbar)^n} \int_{\mathbb{R}^n} \rho \left(x+ \frac{y}{2},x- \frac{y}{2} \right) e^{- \frac{i}{\hbar} p \cdot y} dy,
\label{eqIntroduction2}
\end{equation}
where $\rho ( \cdot, \cdot ) \in L^2 (\mathbb{R}^{2n})$ is the Hilbert-Schmidt kernel of $\widehat{\rho}$. Here $h=2 \pi \hbar$ is Planck's constant and $x,p$ denote the particle's position and momentum respectively. We shall write them collectively as $z=(x,p) \in \mathbb{R}^{2n}$, a point in the particle's phase space $\mathbb{R}^n \times (\mathbb{R}^n)^{\ast} \simeq \mathbb{R}^{2n}$.

The Wigner distribution is not a true probability density as it may be negative \cite{Gro,Hudson}. Rather, it defines a finite signed measure:
\begin{equation}
A \mapsto \mu_{\rho} (A):= \int_{A} W \rho (x,p) dx dp ,
\label{eqIntroduction3}
\end{equation}
for Borel sets $A \in \mathcal{B}(\mathbb{R}^{2n})$, and $ \mu_{\rho} (\mathbb{R}^{2n})=1$.

This means that the covariance matrix $\operatorname*{Cov}(W \rho)$ of $W \rho$ might {\it a priori} not be positive definite. However, it can be shown that it is \cite{Narcow}. In fact, it obeys an even stronger constraint called the Robertson-Schr\"odinger uncertainty principle (RSUP) which states that \cite{Birk,Narcow2,Narcow3,Narconnell}
\begin{equation}
\operatorname*{Cov}(W \rho) + \frac{i \hbar}{2} J \ge 0,
\label{eqIntroduction4}
\end{equation}
where $J$ is the standard symplectic matrix:
\begin{equation}
J= \left(
\begin{array}{c c}
0 & I\\
-I & 0
\end{array}
\right).
\label{eqIntroduction5}
\end{equation}
It can be shown that condition (\ref{eqIntroduction4})
is a necessary but not sufficient condition for a phase space
function to be a Wigner distribution \cite{PLA}.

Nevertheless it has many interesting features. For a Gaussian
measure $G$  it is both a necessary and sufficient condition for
$G$ to be a Wigner distribution \cite{Narcow2}. It is invariant
under linear symplectic transformations (unlike the more
frequently used Heisenberg uncertainty relation). It has a nice
geometric interpretation in terms of Poincar\'e invariants
\cite{Narcow}, and it is intimately related with symplectic
topology and Gromov's non-squeezing theorem
\cite{physreps,Gromov}. By a suitable linear symplectic
transformation, the RSUP makes it a simple task to determine
directions in phase space of minimal uncertainty \cite{Narcow}. In
particular, we say that the RSUP is {\it saturated} if we can find
$n$ two-dimensional symplectic planes, where the uncertainty is
minimal. More specifically, the RSUP (\ref{eqIntroduction4}) is
saturated, whenever all the Williamson invariants of
$\operatorname*{Cov}(W \rho)$ are minimal \cite{physreps,Narcow2}:
\begin{equation}
\lambda_{\sigma,1}(\operatorname*{Cov}(W \rho))=\lambda_{\sigma,2}(\operatorname*{Cov}(W \rho))= \cdots=\lambda_{\sigma, n}(\operatorname*{Cov}(W \rho))= \frac{\hbar}{2}.
\label{eqIntroduction12}
\end{equation}

Having said that, there is nothing about inequality
(\ref{eqIntroduction4}) which is particularly quantum mechanical,
with the exception of the presence of Planck's constant. In fact,
(\ref{eqIntroduction4}) is only a requirement about a minimal
scale related to $\hbar$. This condition is not sufficient
to ensure that the state is quantum mechanical (not even if
saturated). We shall give an example of a function in phase space
which saturates the RSUP, but which is manifestly not a Wigner
function. More emphatically, we will show that {\it any}
measurable function in phase space $F$ with a positive definite
covariance matrix $ \operatorname*{Cov}(F)>0$ satisfies
(\ref{eqIntroduction4}) after a suitable dilation $F(z) \mapsto
\lambda^{2n} F(\lambda z)$, while most of them remain non
quantum. This means that being a quantum state is not only a
question of scale but also of shape. This prompted us to
look for an alternative
uncertainty principle which goes beyond the RSUP.

In order to state our results precisely, let us fix some notation. In the sequel
$\mathcal{F}_{\sigma} (F)$ denotes the symplectic Fourier
transform of the function $F$. Roughly speaking, it can be
obtained from the ordinary Fourier transform $\mathcal{F} (F)$ by
a symplectic rotation and a dilation $(\mathcal{F}_{\sigma} F)
(z)= \frac{1}{(2 \pi \hbar)^n} (\mathcal{F} F) \left(\frac{Jz}{2
\pi \hbar} \right)$.

For a given measurable phase-space function $F$, satisfying
\begin{equation}
\int_{\mathbb{R}^{2n}} F(z) dz \ne 0,
\label{eqIntroduction7}
\end{equation}
we write
\begin{equation}
\widetilde{F} (z):= \frac{F(z)}{\int_{\mathbb{R}^{2n}} F(z) dz}~ .
\label{eqIntroduction7.1}
\end{equation}

Morevover, we denote by
\begin{equation}
<z>_F= \int_{\mathbb{R}^{2n}} z \widetilde{F}(z) dz
\label{eqIntroduction5}
\end{equation}
the expectation value of $z$ regarded as a column vector, and by
\begin{equation}
 \operatorname*{Cov}(F)= \int_{\mathbb{R}^{2n}} (z - <z>_F) (z - <z>_F)^T \widetilde{F}(z) dz
\label{eqIntroduction6}
\end{equation}
the covariance matrix. Notice that there is some abuse of language in this probabilistic terminology, as $F$ is not required to be non-negative.

The main result of this paper is Theorem \ref{TheoremERSUP2}, where we prove the following uncertainty principle, hereafter called {\it refined Robertson-Schr\"odinger uncertainty principle}:
\begin{equation}
\begin{array}{c}
 \operatorname*{Cov}(W \rho) + \frac{i \hbar}{2}J \ge  \\
 \\
 \ge \mathcal{P} \left[W \rho \right]  \left( \operatorname*{Cov}(|\widetilde{W \rho}|^2)+\frac{1}{4} \operatorname*{Cov}(|\mathcal{F}_{\sigma}\widetilde{W \rho}|^2) + \frac{i \hbar}{2}J \right) \ge 0
\end{array}
\label{eqIntroduction8}
\end{equation}
for Wigner distributions $W \rho$ belonging to some appropriate
maximal functional space and where
\begin{equation}
\mathcal{P} \left[W \rho \right]:=(2 \pi \hbar)^n ||W \rho||_{L^2(\mathbb{R}^{2n})}^2
 \label{eqIntroduction9}
\end{equation}
is the so-called purity of the state $\rho$. As before, we have defined:
\begin{equation}
\widetilde{W \rho } (z) := \frac{W \rho (z)}{||W \rho||_{L^2 (\mathbb{R}^{2n})}} , \hspace{1 cm} \mathcal{F}_{\sigma} \widetilde{W \rho } (\zeta) := \frac{\mathcal{F}_{\sigma} W\rho (\zeta )}{||W \rho||_{L^2 (\mathbb{R}^{2n})}}
 \label{eqIntroduction9.1}
\end{equation}
to make sure that $|\widetilde{W \rho } (z) |^2$ and $|\mathcal{F}_{\sigma} \widetilde{W \rho } (\zeta)|^2$ are properly normalized probability densities.

Moreover, we will also show that the first inequality in (\ref{eqIntroduction8}) becomes an equality if an only if the
state is pure.

So, in fact, the refined RSUP amounts to two inequalities. The
first inequality is
\begin{equation}
 \operatorname*{Cov}(|\widetilde{W \rho}|^2)+ \frac{1}{4} \operatorname*{Cov}(|\mathcal{F}_{\sigma}\widetilde{W \rho}|^2) + \frac{i \hbar}{2} J\ge 0.
\label{eqIntroduction9}
\end{equation}
In other words, the matrix $\operatorname*{Cov}(|\widetilde{W
\rho}|^2)+\frac{1}{4} \operatorname*{Cov}(|\mathcal{F}_{\sigma}\widetilde{W
\rho}|^2)$ also obeys the RSUP. The second inequality is
\begin{equation}
 \begin{array}{c}
 \operatorname*{Cov}(W \rho) + \frac{i \hbar}{2} J \ge \\
  \\
  \ge \mathcal{P} \left[W \rho \right] \left[ \operatorname*{Cov}(|\widetilde{W \rho}|^2)+ \frac{1}{4} \operatorname*{Cov}(|\mathcal{F}_{\sigma}\widetilde{W \rho}|^2) + \frac{i \hbar}{2} J \right] .
\end{array}
\label{eqIntroduction10}
\end{equation}
We notice that (\ref{eqIntroduction9}) and
(\ref{eqIntroduction10}) immediately imply the RSUP
(\ref{eqIntroduction4}).

Let us point out the main properties of the refined RSUP:

\vspace{0.3 cm}
\noindent
{\bf (1)} It is parsimonious, in the sense that it is a computable test as the RSUP, but not a complicated one as sets of necessary and sufficient conditions such as the Kastler, Loupias, Miracle-Sole (KLM) conditions \cite{Kastler,LouMiracle1,LouMiracle2}. In fact, we only have to compute the covariance matrices of $W \rho$, $|\widetilde{W \rho}|^2$ and $|\mathcal{F}_{\sigma} (\widetilde{W \rho})|^2$ and check inequalities (\ref{eqIntroduction8}).

\vspace{0.3 cm}
\noindent
{\bf (2)} It is invariant under linear symplectic and anti-symplectic transformations (see Theorem \ref{TheoremSymplecticCovariance}).

\vspace{0.3 cm} \noindent {\bf (3)} It makes a direct connection
with harmonic analysis, as it amounts to an inequality relating $W
\rho$ and its Fourier transform $\mathcal{F}_{\sigma}(W \rho)$.
Here we use the squares $|\widetilde{W\rho}|^2$ and $|\mathcal{F}_{\sigma}(\widetilde{W
\rho})|^2$, and so we are treating $W \rho$ as a wave function in
ordinary quantum mechanics on a $2n$-dimensional configuration
space\footnote{In this interpretation $\operatorname*{Cov}
(|W\rho|^2)$ is the covariance matrix of the $2n$ configurational
variables; and $\operatorname*{Cov}(|\mathcal{F}_{\sigma}W
\rho|^2)$ is the covariance matrix of the $2n$ conjugate
momenta.}.

\vspace{0.3 cm}
\noindent
{\bf (4)} It includes a pure state
condition. Indeed, inequality (\ref{eqIntroduction10}) is an
equality iff the state is pure.

\vspace{0.3 cm} \noindent {\bf (5)} It is stronger than the RSUP.
Indeed, inequality (\ref{eqIntroduction8}) implies immediately the
Robertson-Schr\"odinder uncertainty principle. Example
\ref{ExampleFinal1} shows that it is not equivalent to it.

\vspace{0.3 cm} \noindent {\bf (6)} It is a
deeper quantum mechanical requirement than the condition about a
minimal scale. For instance, in Example \ref{ExampleReview9}, we
show that the saturation (\ref{eqIntroduction12}) of the RSUP can
be easily achieved by many functions which are not Wigner
distributions. On the other hand, we prove in Theorem
\ref{TheoremSaturation} that the refined RSUP is saturated (i.e.
(\ref{eqIntroduction8}) and the saturation condition
(\ref{eqIntroduction12}) are both satisfied) if and only if the
state is a pure Gaussian Wigner function.

As a by-product of the refined RSUP, we also obtain a refinement of the Shannon and Hirschman inequalities \cite{Hirschman,Shannon} for Wigner distributions.

A famous theorem by Shannon \cite{Folland2,Shannon} states that if a
probability density
\begin{equation}
\mu(x) \ge0 , \hspace{1 cm} \int_{\mathbb{R}^{n}} \mu(x) dx=1,
\label{eqentropy1}%
\end{equation}
has finite covariance matrix $Cov(\mu)$, then its Boltzmann entropy
\begin{equation}
E(\mu) := - \int_{\mathbb{R}^{n}} \mu(x) \log\left(  \mu(x) \right)  dx
\label{eqentropy2}%
\end{equation}
is well defined and satisfies the inequality:
\begin{equation}
E(\mu) \le\frac{1}{2} \log\left[  (2 \pi e)^{n} \det\left(  Cov(\mu) \right)
\right]  .
\label{eqentropy3}%
\end{equation}

Another theorem due to Beckner \cite{Beckner}, Bialynicki-Birula and Mycielski \cite{Birula} and Hirschmann \cite{Hirschman} relates the entropy of
$|f|^{2}$, for $f\in L^{2}(\mathbb{R}^{n})$ and $||f||_{2}=1$ with that of
$|\mathcal{F}_{\hbar}f|^{2}$, where $(\mathcal{F}_{\hbar}f)$ is the $\hbar$-scaled Fourier transform. If the entropies of $|f|^{2}$ and $|\mathcal{F}_{\hbar
}f|^{2}$ are well defined then the Hirschman inequality reads:
\begin{equation}
\log\left(  \pi\hbar e\right)^{n}\leq E\left(  |f|^{2}\right)  +E\left(
|\mathcal{F}_{\hbar}f|^{2}\right)  .
\label{eqentropy5}%
\end{equation}
This inequality is sometimes called an entropic uncertainty principle as it
prevents a simultaneous sharp localization of $|f|^{2}$ and $|\mathcal{F}%
_{\hbar}f|^{2}$ and is saturated if and only if $f$ is a Gaussian with minimal
Heisenberg uncertainty.

Of course we may combine (\ref{eqentropy3}) and (\ref{eqentropy5})
and obtain the naive double inequality:
\begin{equation}%
\begin{array}
[c]{c}%
\log\left(  \pi\hbar e \right)^{n} \le E \left(  |f|^{2} \right)  + E
\left(  | \mathcal{F}_{\hbar} f|^{2} \right)  \le\\
\\
\le\log\left[  (2 \pi e)^{n} \sqrt{\det\left(  Cov (|f|^{2}) \right)
\cdot\det\left(  Cov (|\mathcal{F}_{\hbar} f|^{2})\right)  } \right]~.
\end{array}
\label{eqentropy6}%
\end{equation}
This can be stated in the following terms: if $|f|^{2} $ and $| \mathcal{F}%
_{\hbar}f|^{2}$ have finite covariance matrices, then they have well defined
entropies and inequality (\ref{eqentropy6}) holds. Moreover, we have
equalities throughout if and only if $f$ is a Gaussian. The inequality between
the first and the last term is, upon exponentiation, the Heinig-Smith
uncertainty principle \cite{Heinig}.

As a consequence of inequality (\ref{eqentropy6}) for the Wigner distribution and the refined RSUP (\ref{eqIntroduction8}), we derive the following Hirschman-Shannon inequality (Theorem \ref{Theorementropy1}):
\begin{equation}%
\begin{array}
[c]{c}%
\log\left[  (2\pi e)^{2n}\det\left(  Cov(W\rho)\right)  \right]  \geq\\
\\
\geq\log\left[  \left(  \pi e\mathcal{P}\left[W\rho\right]\right)  ^{2n}\sqrt{\det\left(
Cov(|\widetilde{W\rho}|^{2})\right)  \cdot\det\left(  Cov(|\mathcal{F}_{\hbar
}\widetilde{W\rho}|^{2})\right)  }\right]  \geq\\
\\
\geq2n\log\left(  \mathcal{P}\left[W\rho \right]\right)  +E\left(  |\widetilde{W\rho}%
|^{2}\right)  +E\left(  |\mathcal{F}_{\hbar}\widetilde{W\rho}|^{2}\right)
\geq\log\left(  \pi\hbar e\mathcal{P}\left[W\rho \right]\right)  ^{2n} .%
\end{array}
\label{eqentropy26}%
\end{equation}

We obtain an inequality throughout (\ref{eqentropy26}) if and only if $W \rho$ is the Wigner distribution of a pure Gaussian state.

For pure states $W \rho=W \psi$, the refined RSUP leads to the following Hirschman-Lieb-Shannon relation which
involves $W \psi$ only and not its Fourier transform (Corollary \ref{Corollary3}):
\begin{equation}%
\begin{array}
[c]{c}%
\log\left[  (2\pi e)^{n} \sqrt{\det\left(  Cov (W \psi) \right)  }\right]
\ge\\
\\
\ge\log\left[  \left(  2\pi e \right)  ^{n} \sqrt{\det\left(  Cov
(|\widetilde{W \psi}|^{2}) \right)  } \right]  \ge\\
\\
\ge E \left(  |\widetilde{W \psi}|^{2}\right)  \ge\log\left(  \frac{ \pi\hbar
e}{2} \right)  ^{2n}. %
\end{array}
\label{eqentropy30}%
\end{equation}

Before we conclude the introduction, let us comment on the new parts of the inequalities (\ref{eqentropy26},\ref{eqentropy30}). In (\ref{eqentropy26}) the last inequality is the Hirschman inequality for the Wigner function and the penultimate inequality is the Shannon inequality applied both to $|\widetilde{W \rho}|^2$ and to $|\mathcal{F}_{\hbar} \widetilde{W \rho}|^2$. The new inequalities are:
\begin{equation}
\det\left(  Cov(W\rho)\right) \ge \left(\frac{ \mathcal{P}\left[W\rho \right]}{2} \right)^{2n} \sqrt{\det\left(
Cov(|\widetilde{W\rho}|^{2})\right)  \cdot\det\left(  Cov(|\mathcal{F}_{\hbar
}\widetilde{W\rho}|^{2})\right)  },
\label{eqentropy30.1}%
\end{equation}
and
\begin{equation}
\log\left[  \left(\frac{2\pi e}{\mathcal{P}\left[W\rho\right]} \right)^{2n}\det\left(  Cov(W\rho)\right)  \right]  \geq\\
\\
E\left(  |\widetilde{W\rho}|^{2}\right)  +E\left(  |\mathcal{F}_{\hbar}\widetilde{W\rho}|^{2}\right).
\label{eqentropy30.2}%
\end{equation}

In (\ref{eqentropy30}) the last inequality is the entropic inequality of Lieb \cite{Lieb}. The penultimate inequality is the Shannon inequality applied to $|\widetilde{W \psi}|^{2}$. The new inequalities are:
\begin{equation}
\det\left(  Cov (W \psi) \right)  \ge \det\left(  Cov (|\widetilde{W \psi}|^2) \right),
\label{eqentropy30.3}%
\end{equation}
and
\begin{equation}
\log\left[  (2\pi e)^{n} \sqrt{\det\left(  Cov (W \psi) \right)  }\right]
\ge E \left(  |\widetilde{W \psi}|^{2}\right) .
\label{eqentropy30.4}%
\end{equation}

\section*{Notation}

The Plancherel-Fourier transform of a function $f \in L^1 (\mathbb{R}^n) \cap L^2 (\mathbb{R}^n)$ is defined by:
\begin{equation}
(\mathcal{F}f) (\omega):= \int_{\mathbb{R}^n} f(x) e^{-2 i \pi \omega \cdot x} dx
\label{eqNotation1}
\end{equation}
and the $\hbar$-scaled Fourier transform is:
\begin{equation}
(\mathcal{F}_{\hbar}f)(p):=\left(  \tfrac{1}{2\pi\hbar}\right)^{n/2}%
\int_{\mathbb{R}^{n}}f(x)e^{-\frac{i}{\hbar}x\cdot p}dx .
\label{eqentropy4}%
\end{equation}
We use lower case letters $f,g,\cdots$ for functions defined on the configuration space $\mathbb{R}^n$ and upper case letters from the middle of the alphabet $F,G, \cdots$ for functions on the phase space $\mathbb{R}^{2n}$. We shall use the physicists' convention for the inner product (anti-linear in the first argument and linear in the second)
\begin{equation}
(f|g)= \int_{\mathbb{R}^n} \overline{f(x)} g(x) dx.
\label{eqNotation2}
\end{equation}
To avoid a proliferation of subscripts, we use the notation
\begin{equation}
((F|G))= \int_{\mathbb{R}^{2n}} \overline{F(z)} G(z) dz
\label{eqNotation3}
\end{equation}
for the inner product on the phase space. Similarly we denote by $|| \cdot ||$ the norm on $L^2 (\mathbb{R}^n)$ and by  $||| \cdot |||$ that on $L^2 (\mathbb{R}^{2n})$. Sometimes, when more general $L^p $ norms are needed, we will be more specific and write $|| \cdot||_{L^p(\mathbb{R}^n)}$.

The Schwartz class of test functions is $\mathcal{S} (\mathbb{R}^n)$ and its dual - the space of tempered distributions - is denoted by $\mathcal{S}^{\prime} (\mathbb{R}^n)$. The distributional bracket is written $< \cdot, \cdot>$.

Given a functional space $L$, we denote by $\mathcal{F}L$ the set of distributions $f \in \mathcal{S}^{\prime} (\mathbb{R}^n)$ for which $\mathcal{F}f \in L$.

\section{A review of Wigner distributions}

In this section, we recapitulate the main results about Wigner distributions, which we will need in the sequel.

\subsection{Symplectic geometry}

The standard symplectic form on $\mathbb{R}^{2n} = \mathbb{R}_x^n \times \mathbb{R}_p^n$ is given by
\begin{equation}
\sigma (z,z^{\prime}) = z \cdot J^T z^{\prime}= p \cdot x^{\prime}- x \cdot p^{\prime},
\label{eqReview1}
\end{equation}
for $z=(x,p)$ and $z=(x^{\prime},p^{\prime})$. A linear automorphism $s: \mathbb{R}^{2n} \to \mathbb{R}^{2n}$ is a symplectic transformation if $\sigma (s(z),s(z^{\prime}))= \sigma (z,z^{\prime})$ for all $z,z^{\prime} \in \mathbb{R}^{2n}$. Let the symplectic transformation be represented by the matrix $S \in Gl(2n)$: $s(z) =Sz$. Then
\begin{equation}
S^T J  S=J.
\label{eqReview2}
\end{equation}
The set of real $2n \times 2n$ matrices which satisfy (\ref{eqReview2}) form a group called the symplectic group $Sp(n)$. If a matrix $A \in Gl(2n)$ is such that
\begin{equation}
A^T J  A= -J,
\label{eqReview3}
\end{equation}
then it is said to be anti-symplectic. Every anti-symplectic matrix $A$ can be written as \cite{PAMS}
\begin{equation}
A= T S,
\label{eqReview4}
\end{equation}
where $S \in Sp(n)$, and $T$ is usually interpreted as a "time-reversal" operator, since it amounts to a reversal of the particle's momentum:
\begin{equation}
T=\left(
\begin{array}{c c}
I & 0\\
0 & -I
\end{array}
\right).
\label{eqReview5}
\end{equation}
We shall denote the group of matrices which are either symplectic or anti-symplectic by $ASp(n)$.

Given a real symmetric positive definite matrix $B$ its symplectic eigenvalues (also called Williamson invariants) are given by the moduli of the eigenvalues of the matrix $B J^{-1}$ \cite{Gosson,Williamson}. Since they come in pairs $\pm i \lambda$ $(\lambda >0)$, we denote the $n$ moduli in increasing order by:
\begin{equation}
0 < \lambda_{\sigma,1}(B) \le \lambda_{\sigma,2}(B) \le \cdots \le \lambda_{\sigma,n}(B).
\label{eqReview6}
\end{equation}
The set
\begin{equation}
Spec_{\sigma} (B)= \left(\lambda_{\sigma,1}(B) , \lambda_{\sigma,2}(B) , \cdots , \lambda_{\sigma,n}(B) \right)
\label{eqReview7}
\end{equation}
is called the symplectic spectrum of $B$. Williamson's Theorem \cite{Williamson} states that the matrix $B$ can be diagonalized to a "normal" form by way of a similarity transformation with a symplectic matrix. More specifically, there exists $S \in Sp(n)$ such that
\begin{equation}
SBS^T = \left(
\begin{array}{c c}
\Lambda & 0 \\
0 & \Lambda
\end{array}
\right),
\label{eqReview8}
\end{equation}
where $\Lambda = diag \left(\lambda_{\sigma,1}(B) , \lambda_{\sigma,2}(B) , \cdots , \lambda_{\sigma,n}(B) \right)$.

\subsection{Weyl operators}

The symplectic Fourier transform of a function $F \in L^1 (\mathbb{R}^{2n}) \cap L^2 (\mathbb{R}^{2n})$ is given by:
\begin{equation}
(\mathcal{F}_{\sigma} F) (\zeta) = \frac{1}{(2 \pi \hbar)^n} \int_{\mathbb{R}^{2n}} F(z) e^{-\frac{i}{\hbar} \sigma (\zeta,z)} dz.
\label{eqReview9}
\end{equation}
It is related to the Fourier transform (\ref{eqNotation1}) and the $\hbar$-scaled Fourier transform (\ref{eqentropy4}) by:
\begin{equation}
(\mathcal{F}_{\sigma} F)(\zeta) = \frac{1}{(2 \pi \hbar)^n}(\mathcal{F} F) \left(  \frac{J \zeta}{2 \pi \hbar}\right)=\left(\mathcal{F}_{\hbar} F \right) (J \zeta).
\label{eqReview10}
\end{equation}
The symplectic Fourier transform is an involution which extends by duality to an involutive automorphism $\mathcal{S}^{\prime} (\mathbb{R}^{2n}) \to \mathcal{S}^{\prime} (\mathbb{R}^{2n})$.

Given a symbol $a \in \mathcal{S}^{\prime} (\mathbb{R}^{2n})$, the associated Weyl operator is given by the Bochner integral \cite{Birk,Gosson}:
\begin{equation}
\widehat{A}:= \left( \frac{1}{2 \pi \hbar} \right)^n \int_{\mathbb{R}^{2n}} (\mathcal{F}_{\sigma} a) (z_0) \widehat{T} (z_0 ) dz_0,
\label{eqReview11}
\end{equation}
where $\widehat{T} (z_0)$ is the Heisenberg-Weyl operator
\begin{equation}
(\widehat{T} (z_0) f) (x)= e ^{\frac{i}{\hbar} p_0 \cdot\left(x- \frac{x_0}{2} \right)} f(x-x_0),
\label{eqReview12}
\end{equation}
for $z_0 = (x_0,p_0) \in \mathbb{R}^{2n}$ and $f \in \mathcal{S} (\mathbb{R}^n)$. We remark that the operator $\widehat{A}$ is formally self-adjoint if and only its symbol $a$ is real.

The Weyl correspondence, written $a \overset{\mathrm{Weyl}}{\longleftrightarrow} \widehat{A}$ or $\widehat{A} \overset{\mathrm{Weyl}}{\longleftrightarrow} a$, between an element $a \in \mathcal{S}^{\prime} (\mathbb{R}^{2n})$ and the Weyl operator it defines is bijective; in fact the Weyl transformation is one-to-one from $\mathcal{S}^{\prime} (\mathbb{R}^{2n})$ onto the space $\mathcal{L}\left( \mathcal{S}(\mathbb{R}^{n}),\mathcal{S}^{\prime} (\mathbb{R}^{n})\right)$ of linear continuous maps $\mathcal{S}(\mathbb{R}^{n}) \to \mathcal{S}^{\prime} (\mathbb{R}^{n})$ (see e.g. Maillard \cite{Maillard}, Unterberger \cite{Unterberger} or Wong \cite{Wong}). This can be proven using Schwartz's kernel theorem and the fact that the Weyl symbol $a$ of the operator $\widehat{A}$ is related to the distributional kernel $K_A$ of that operator by the partial Fourier transform with respect to the y variable
\begin{equation}
a(x, p) = \int_{\mathbb{R}^n} K_A \left( x+ \frac{y}{2},x- \frac{y}{2} \right) e^{- \frac{i}{\hbar}p \cdot y} dy,
\label{eqReview13}
\end{equation}
where $K_A \in  \mathcal{S}^{\prime} (\mathbb{R}^n \times \mathbb{R}^n )$ and the Fourier transform is defined in the usual distributional sense. Conversely, the kernel $K_A$ is expressed in terms of the symbol $a$ by the inverse Fourier transform
\begin{equation}
K_A(x, y) = \left( \frac{1}{2 \pi \hbar} \right)^n \int_{\mathbb{R}^n} a \left(\frac{x+y}{2},p \right) e^{ \frac{i}{\hbar} p\cdot (x-y)} dp.
\label{eqReview14}
\end{equation}

Weyl operators enjoy the following symplectic covariance property \cite{Folland,Birk,Gosson,Gro,Wong}. Let $S \in Sp(n)$ and $\widehat{S} \in Mp(n)$ be one of the two metaplectic operators that project onto $S$. Recall that metaplectic operators constitute a unitary representation of the two-fold cover $Sp_2(n)$ of $Sp(n)$. If $\widehat{A}: \mathcal{S} (\mathbb{R}^n) \to \mathcal{S}^{\prime} (\mathbb{R}^n)$ is  a Weyl operator with symbol $a \in  \mathcal{S}^{\prime} (\mathbb{R}^{2n})$, then we have
\begin{equation}
\widehat{S}^{-1} \widehat{A} \widehat{S} \overset{\mathrm{Weyl}}{\longleftrightarrow} a \circ S .
\label{eqReview14.1}
\end{equation}
Since an anti-symplectic transformation is the composition $TS$ (see (\ref{eqReview4})) it suffices to consider the action of $T$. Quantum mechanically, this is implemented by the anti-linear operator
\begin{equation}
(\widehat{T}f)(x) = \overline{f(x)}.
\label{eqReview14.1.A}
\end{equation}
This also supports the interpretation of $T$ as a time reversal. If $f$ obeys the Schr\"odinger equation, then $\overline{f}$ obeys the same equation with the time reversal $t \to -t$.

Assuming that the product $\widehat{A}\widehat{B}$ exists (which is the case for instance if $\widehat{B} : \mathcal{S}(\mathbb{R}^{n}) \to \mathcal{S}(\mathbb{R}^{n})$) the Weyl symbol $c$
of $\widehat{C}= \widehat{A}\widehat{B}$ and its symplectic Fourier transform $\mathcal{F}_{\sigma} c$ are given by the formulae:
\begin{equation}
c(z) =
\left(\frac{1}{4 \pi \hbar} \right)^{2n} \int_{\mathbb{R}^{2n}} \int_{\mathbb{R}^{2n}} a\left(z + \frac{u}{2} \right) b \left(z - \frac{v}{2} \right) e^{\frac{i}{2 \hbar} \sigma (u,v)} du dv,
\label{eqReview15}
\end{equation}
and
\begin{equation}
(\mathcal{F}_{\sigma} c)(z) =\left(\frac{1}{2 \pi \hbar} \right)^{n} \int_{\mathbb{R}^{2n}} (\mathcal{F}_{\sigma} a) (z-z^{\prime}) (\mathcal{F}_{\sigma} b) (z^{\prime}) e^{\frac{i}{2 \hbar} \sigma (z,z^{\prime})} d z^{\prime}.
\label{eqReview16}
\end{equation}
The first formula is often written $c = a \star_{\hbar} b$ and $a \star_{\hbar} b$ is called the \textit{twisted product} or \textit{Moyal product} (see e.g. \cite{Folland,Groenewold,Moyal,Wong}).

\subsection{Quantum states and Wigner functions}

An important case consists of rank one operators of the form:
\begin{equation}
\left(\widehat{\rho}_{f,g} h \right) (x) = (g|h) f (x),
\label{eqReview17}
\end{equation}
for fixed $f,g \in L^2 (\mathbb{R}^n)$ acting on $h \in L^2 (\mathbb{R}^n)$. They are Hilbert-Schmidt operators with kernel $K_{f,g}(x,y) = (f \otimes \overline{g}) (x,y)=f(x)  \overline{g(y)} $. According to (\ref{eqReview13}), the associated Weyl symbol is:
\begin{equation}
\rho_{f,g}(x,p) = \int_{\mathbb{R}^n} f \left( x + \frac{y}{2} \right) \overline{g \left( x - \frac{y}{2} \right)} e^{- \frac{i}{\hbar} p \cdot y} dy.
\label{eqReview18}
\end{equation}
This is just the cross-Wigner function up to a multiplicative constant:
\begin{equation}
\begin{array}{c}
W(f,g)(x,p)= \left(\frac{1}{2 \pi \hbar} \right)^n  \rho_{f,g} (x,p)=\\
 \\
 = \left(\frac{1}{2 \pi \hbar} \right)^n \int_{\mathbb{R}^n} f \left( x + \frac{y}{2} \right) \overline{g \left( x - \frac{y}{2} \right)} e^{- \frac{i}{\hbar} p \cdot y} dy.
\end{array}
\label{eqReview19}
\end{equation}
From (\ref{eqReview14.1}), we conclude that
\begin{equation}
W(\widehat{S}f,\widehat{S}g)(z)=W(f,g)(S^{-1} z).
\label{eqReview19.1}
\end{equation}
If $g=f$, we simply write $Wf$ meaning $W(f,f)$:
\begin{equation}
Wf(x,p)= \left(\frac{1}{2 \pi \hbar} \right)^n \int_{\mathbb{R}^n} f \left( x + \frac{y}{2} \right) \overline{f \left( x - \frac{y}{2} \right)} e^{- \frac{i}{\hbar} p \cdot y} dy.
\label{eqReview20}
\end{equation}
We say that $W f$ is the Wigner function \cite{Wigner} associated with the pure state $f \in L^2 (\mathbb{R}^n)$.

In quantum mechanics, one usually has to deal with statistical mixtures of pure states. This means that pure states represented by the rank one operators $\widehat{\rho}_f =\widehat{\rho}_{f,f}$ (see (\ref{eqReview17})) are replaced by convex combinations of the form:
\begin{equation}
\widehat{\rho} = \sum_{\alpha} p_{\alpha} \widehat{\rho}_{f_{\alpha}},
\label{eqReview21}
\end{equation}
with $p_{\alpha} \ge 0$ and $\sum_{\alpha} p_{\alpha} =1$. The convergence of the series in (\ref{eqReview21}) is understood in the sense of the trace norm. Operators of this form are called density matrices. They are positive trace class operators with unit trace. The set of density matrices - the set of states - is denoted by $\mathcal{S}( L^2 (\mathbb{R}^n))$. A density matrix $\widehat{\rho}$ is a Hilbert-Schmidt operator with kernel:
\begin{equation}
\rho (x,y) = \sum_{\alpha} p_{\alpha} f_{\alpha}(x) \overline{ f_{\alpha}(y)}.
\label{eqReview22}
\end{equation}
The associated Wigner function is
\begin{equation}
\begin{array}{c}
W \rho (x,p) = \sum_{\alpha} p_{\alpha} Wf_{\alpha} (x,p)= \left(\frac{1}{2 \pi \hbar}\right)^n \int_{\mathbb{R}^n} \rho \left(x+ \frac{y}{2},x- \frac{y}{2} \right) e^{- \frac{i}{\hbar} p \cdot y} dy=\\
\\
=  \left(\frac{1}{2 \pi \hbar}\right)^n \sum_{\alpha} p_{\alpha} \int_{\mathbb{R}^n}  f_{\alpha}\left(x+ \frac{y}{2} \right) \overline{f_{\alpha}\left(x- \frac{y}{2} \right)} e^{- \frac{i}{\hbar} p \cdot y} dy
\end{array}
\label{eqReview23}
\end{equation}
with uniform convergence.

We shall denote by $\mathcal{W} (\mathbb{R}^{2n})$ the set of all Wigner functions associated with density matrices, that is the range of the Weyl transform acting on $\mathcal{S}(L^2 (\mathbb{R}^n))$. This is basically the set of quantum mechanical states in the Weyl-Wigner representation. One can tell whether an element $W \rho \in \mathcal{W} (\mathbb{R}^{2n})$ represents a pure or a mixed state by calculating its purity:
\begin{equation}
\mathcal{P} \left[W \rho \right]:= (2 \pi \hbar)^n ||| W \rho|||^2.
\label{eqReview23.1}
\end{equation}
We have:
\begin{equation}
\left\{
\begin{array}{l l}
\mathcal{P} \left[W \rho \right]=1, & \mbox{if $W \rho$ is a pure state}\\
& \\
\mathcal{P} \left[W \rho \right]<1, & \mbox{if $W \rho$ is a mixed state}
\end{array}
\right.
\label{eqReview23.1}
\end{equation}
One aspect which makes the Wigner formalism very appealing is the fact that expectation values are computed with a formula akin to classical statistical mechanics \cite{Folland,Gosson,Wong}. Indeed, if $\widehat{A}$ is a self-adjoint Weyl operator with symbol $a \in \mathcal{S} (\mathbb{R}^{2n})$, then it can be shown that
\begin{equation}
(g|\widehat{A} f)= ((a| W(g,f) )),
\label{eqReview24}
\end{equation}
for $f,g \in  \mathcal{S} (\mathbb{R}^n)$. In particular, we have:
\begin{equation}
<\widehat{A} >_f=(f|\widehat{A} f)= \int_{\mathbb{R}^{2n}} a(x,p) W f(x,p) dx dp.
\label{eqReview25}
\end{equation}
For a generic self-adjoint Weyl operator $\widehat{A} \overset{\mathrm{Weyl}}{\longleftrightarrow} a$ which is also trace-class, the following identity holds:
\begin{equation}
Tr (\widehat{A}) =  \left( \frac{1}{2 \pi \hbar}\right)^n \int_{\mathbb{R}^{2n}} a(z) dz.
\label{eqReview26}
\end{equation}
If $\widehat{A} \overset{\mathrm{Weyl}}{\longleftrightarrow} a$ and $\widehat{B} \overset{\mathrm{Weyl}}{\longleftrightarrow} b$ are Weyl operators such that $\widehat{A} \widehat{B}$ is trace-class, then we have\cite{Birk,Gosson}:
\begin{equation}
Tr (\widehat{A}\widehat{B}) =   \left( \frac{1}{2 \pi \hbar}\right)^n \int_{\mathbb{R}^{2n}} a(z)\star_{\hbar} b(z) dz = \left( \frac{1}{2 \pi \hbar}\right)^n \int_{\mathbb{R}^{2n}} a(z) b(z) dz .
\label{eqReview27}
\end{equation}
In particular, for density matrices (\ref{eqReview25}) generalizes to
\begin{equation}
< \widehat{A}>_{\rho} = Tr ( \widehat{A} \widehat{\rho}) =\int_{\mathbb{R}^{2n}} a(z) W \rho (z) dz,
\label{eqReview27.1}
\end{equation}
provided $\widehat{A} \widehat{\rho}$ is trace class.

In general, it is very difficult to determine whether a given phase space function $F$ is the Wigner function of some density matrix $\widehat{\rho} \in \mathcal{S}(L^2 (\mathbb{R}^n))$. It can be shown that \cite{Dias1,Lions}:
\begin{theorem}\label{TheoremReview1}
Let $F: \mathbb{R}^{2n} \to \mathbb{C}$ be a measurable function. We have $F \in \mathcal{W} (\mathbb{R}^{2n})$ if and only if:

\vspace{0.3 cm}
\noindent
(i) $F$ is a real function,

\vspace{0.3 cm}
\noindent
(ii) $F \in L^2 (\mathbb{R}^{2n})$,

\vspace{0.3 cm}
\noindent
(iii) $\int_{\mathbb{R}^{2n}}F (z) dz =1$,

\vspace{0.3 cm}
\noindent
(iv) $\int_{\mathbb{R}^{2n}}F (z) W f (z) dz \ge 0$, for all $f \in L^2 (\mathbb{R}^n) $.
\end{theorem}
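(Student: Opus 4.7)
The proof splits naturally into the two implications. For necessity, assume $F=W\rho$ for some $\widehat{\rho}\in\mathcal{S}(L^2(\mathbb{R}^n))$. Each of the four conditions is then a direct consequence of the dictionary reviewed in Section 3. Property (i) follows because $\widehat{\rho}$ is self-adjoint, so its Weyl symbol $\rho=(2\pi\hbar)^n W\rho$ is real by the remark after (\ref{eqReview12}). Property (ii) follows because a trace class operator is Hilbert--Schmidt, and the Weyl correspondence is an isometry (up to the $(2\pi\hbar)^{n/2}$ normalization) from $L^2(\mathbb{R}^{2n})$ onto the Hilbert--Schmidt class. Property (iii) is just the trace formula (\ref{eqReview26}) applied to $\widehat{\rho}$, giving $1=\mathrm{Tr}(\widehat{\rho})=(2\pi\hbar)^{-n}\int\rho\,dz=\int F\,dz$. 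Property (iv) is the positivity of $\widehat{\rho}$ rewritten via (\ref{eqReview25}): $(f|\widehat{\rho}f)=\int\rho\,Wf\,dz=(2\pi\hbar)^n\int F\cdot Wf\,dz\ge 0$.

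For sufficiency, I would start from an $F$ satisfying (i)--(iv) and \emph{define} $\widehat{\rho}$ to be the Weyl operator with symbol $\rho:=(2\pi\hbar)^n F$. Property (i) together with the self-adjointness criterion after (\ref{eqReview12}) makes $\widehat{\rho}$ symmetric; property (ii) together with the $L^2$-to-Hilbert--Schmidt isometry makes $\widehat{\rho}$ a bounded (in fact compact) self-adjoint operator. Next, property (iv) combined with (\ref{eqReview25}) gives $(f|\widehat{\rho}f)\ge 0$ first for $f\in\mathcal{S}(\mathbb{R}^n)$, and then for all $f\in L^2(\mathbb{R}^n)$ by density and continuity of $\widehat{\rho}$. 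Thus $\widehat{\rho}\ge 0$.

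The main obstacle is upgrading ``positive Hilbert--Schmidt'' to ``trace class with trace $1$''; this does not follow automatically since $\sum\lambda_j^2<\infty$ does not in general imply $\sum\lambda_j<\infty$. The plan is to exploit $F\in L^1$ (given by (iii)) via the spectral decomposition of the compact positive operator $\widehat{\rho}=\sum_j\lambda_j|\psi_j\rangle\langle\psi_j|$ with $\lambda_j\ge 0$ and $\{\psi_j\}$ an ONB of $L^2(\mathbb{R}^n)$. Each eigenvalue satisfies $\lambda_j=(\psi_j|\widehat{\rho}\psi_j)=(2\pi\hbar)^n\int F(z)W\psi_j(z)\,dz$ by (\ref{eqReview25}). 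For the finite rank truncations $\widehat{\rho}_N=\sum_{j\le N}\lambda_j|\psi_j\rangle\langle\psi_j|$, which are trace class with Weyl symbol $\rho_N=(2\pi\hbar)^n\sum_{j\le N}\lambda_j W\psi_j$, the trace formula (\ref{eqReview26}) gives $\sum_{j\le N}\lambda_j=(2\pi\hbar)^{-n}\int\rho_N\,dz$. Because $0\le\widehat{\rho}_N\le\widehat{\rho}$ in the operator order, testing against any approximate resolution of the identity (for instance the Gaussian coherent states, whose Wigner functions form a positive partition of unity after integration) one obtains the uniform bound $\sum_{j\le N}\lambda_j\le\int F(z)\,dz=1$ independent of $N$. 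Monotone convergence (applied to the non-negative sequence $\lambda_j$) then yields $\mathrm{Tr}(\widehat{\rho})=\sum_j\lambda_j<\infty$, so $\widehat{\rho}$ is trace class.

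Finally, once trace class is established, (\ref{eqReview26}) applied to $\widehat{\rho}$ itself gives $\mathrm{Tr}(\widehat{\rho})=(2\pi\hbar)^{-n}\int\rho\,dz=\int F\,dz=1$ by (iii), so $\widehat{\rho}\in\mathcal{S}(L^2(\mathbb{R}^n))$ and by construction $W\rho=F$, proving $F\in\mathcal{W}(\mathbb{R}^{2n})$. The delicate step throughout is the passage from Hilbert--Schmidt + positivity + integrable Weyl symbol to trace class, which is where an explicit positive test family (coherent states / Husimi transform) is essential; the other parts are direct applications of the Weyl-correspondence formulas already recorded in the paper.
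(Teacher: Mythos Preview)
The paper does not actually prove this theorem: it is stated with the citations \cite{Dias1,Lions} and followed only by the one-sentence gloss ``The first two conditions mean that $F$ is the Weyl symbol of a self-adjoint Hilbert--Schmidt operator. The last condition means that this operator is positive. These conditions, together with (iii), imply that the operator is trace class and that the trace is equal to one.'' Your proposal is a correct fleshing-out of exactly this outline, so there is nothing to compare strategies against.

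Two minor comments on execution. First, condition (iv) is already stated for all $f\in L^2(\mathbb{R}^n)$, and since both $F$ and $Wf$ lie in $L^2(\mathbb{R}^{2n})$ the pairing $((F|Wf))$ equals $(2\pi\hbar)^{-n}(f|\widehat{\rho}f)$ directly by the $L^2$--Hilbert--Schmidt isometry of the Weyl transform; you do not need to pass through $\mathcal{S}$ and density. Second, your coherent-state argument for trace class is the standard one and works: for a positive bounded operator the (possibly infinite) trace equals $(2\pi\hbar)^{-n}\int(\phi_{z_0}|\widehat{\rho}\,\phi_{z_0})\,dz_0$, each integrand is $(2\pi\hbar)^{n}\int F(z)W\phi_{z_0}(z)\,dz\ge 0$ by (iv), and Fubini (legitimate since $W\phi_{z_0}\ge 0$ and $F\in L^1$ by (iii)) gives $\mathrm{Tr}(\widehat{\rho})=\int F\,dz=1$. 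This is cleaner than the finite-rank truncation you sketch, where the partial Wigner sums $\sum_{j\le N}W\psi_j$ need not be sign-definite and the bound $\sum_{j\le N}\lambda_j\le 1$ is not immediate without invoking the coherent states anyway.
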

The first two conditions mean that $F$ is the Weyl symbol of a self-adjoint Hilbert-Schmidt operator. The last condition means that this operator is positive. These conditions, together with (iii), imply that the operator is trace class and that the trace is equal to one.

This set of conditions are somewhat tautological as they require the knowledge of the set of pure state Wigner functions $W f$ to check the positivity (iv).

There are an alternative set of necessary and sufficient conditions, the Kastler, Loupias, Miracle-Sole (KLM) conditions \cite{Kastler,LouMiracle1,LouMiracle2}, that do not share this disadvantage. However, they are virtually impossible to check, as they amount to verifying the positivity of an infinite hierarchy of matrices of growing dimension (see also \cite{Dias2,Narcow2,Narcow3,Narconnell}). In practise, these conditions can be checked up to a given finite order, in which case they provide a set of necessary but not sufficient conditions for a measurable function $F$ to belong to $\mathcal{W}(\mathbb{R}^{2n})$. Other, more practical, necessary conditions are the uncertainty principles.

\subsection{Uncertainty principles}

One of the hallmarks of quantum mechanics is the uncertainty principle. For a survey of mathematical aspects of the uncertainty principle see \cite{Folland2}. Good discussions on the physical interpretation and implications of the uncertainty principle can be found in \cite{Busch1,Busch2}. Roughly speaking, an uncertainty principle poses an obstruction to a state being sharply localized both in position and in momentum space. There are various ways one can formulate this principle mathematically. For instance, one can show that (see e.g. \cite{PLA,Janssen})
\begin{theorem}\label{TheoremReview2}
If $W \rho \in \mathcal{W}(\mathbb{R}^{2n})$, then $W \rho$ is uniformly continuous and it cannot be compactly supported.
\end{theorem}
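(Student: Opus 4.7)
The plan is to prove the two assertions---uniform continuity of $W\rho$ and non-compactness of its support---by independent arguments, leveraging the spectral decomposition (\ref{eqReview21}) of $\widehat\rho$ in both.

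For uniform continuity, I would first treat the pure-state case $f \in L^2(\mathbb{R}^n)$. A direct application of Cauchy--Schwarz to (\ref{eqReview19}) after the substitution $u = y/2$ yields the bilinear bound $|W(h_1,h_2)(z)| \le (\pi\hbar)^{-n}\|h_1\|\,\|h_2\|$. Combined with the polarization-type identity $Wf - Wg = W(f-g,f) + W(g,f-g)$, this upgrades to the Lipschitz estimate
\[
\|Wf - Wg\|_{L^\infty(\mathbb{R}^{2n})} \le (\pi\hbar)^{-n}\|f-g\|(\|f\|+\|g\|).
\]
Next, I would pick a Schwartz sequence $f_n \in \mathcal{S}(\mathbb{R}^n)$ with $f_n \to f$ in $L^2$: the estimate above gives $Wf_n \to Wf$ uniformly on $\mathbb{R}^{2n}$. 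Since each $Wf_n \in \mathcal{S}(\mathbb{R}^{2n})$ lies in the space $C_0(\mathbb{R}^{2n})$ of continuous functions vanishing at infinity---a space closed under uniform convergence---the limit $Wf$ also belongs to $C_0(\mathbb{R}^{2n})$, which in particular gives uniform continuity. For the mixed case, expand $W\rho = \sum_\alpha p_\alpha Wf_\alpha$ with $\sum_\alpha p_\alpha = 1$ and $\|f_\alpha\| = 1$; the uniform bound $\|Wf_\alpha\|_\infty \le (\pi\hbar)^{-n}$ and the Weierstrass M-test give uniform convergence of the series, so $W\rho$ is a uniform limit of $C_0$-functions, hence itself uniformly continuous.

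For the non-compactness of support I would argue by contradiction. Assume $W\rho$ is supported in a compact set $K \subset \mathbb{R}^{2n}$. Since $\widehat\rho$ is Hilbert--Schmidt, $W\rho \in L^2(\mathbb{R}^{2n})$, and compact support combined with Cauchy--Schwarz forces $W\rho \in L^1(\mathbb{R}^{2n})$. By the standard marginal property of the Wigner transform, the two marginals of $W\rho$ are the position and momentum densities of the state:
\[
\int_{\mathbb{R}^n} W\rho(x,p)\,dp = \sum_\alpha p_\alpha |f_\alpha(x)|^2, \qquad \int_{\mathbb{R}^n} W\rho(x,p)\,dx = \sum_\alpha p_\alpha |\mathcal{F}_\hbar f_\alpha(p)|^2,
\]
both of which vanish outside the compact projections $\pi_x(K)$ and $\pi_p(K)$. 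Non-negativity of every term then forces each individual $|f_\alpha|^2$ to be supported in $\pi_x(K)$ and each $|\mathcal{F}_\hbar f_\alpha|^2$ in $\pi_p(K)$. By Paley--Wiener, the Fourier transform of a compactly supported $L^2$-function on $\mathbb{R}^n$ extends to an entire function on $\mathbb{C}^n$; by the identity theorem, such an entire function vanishing on the non-empty open set $\mathbb{R}^n \setminus \pi_p(K)$ must vanish identically. Hence each $f_\alpha = 0$, contradicting $\|f_\alpha\| = 1$.

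The main obstacle is the uniform-continuity half: a direct operator-theoretic estimate of something like $\|\widehat{T}(z)\widehat\rho - \widehat{T}(z')\widehat\rho\|_{\mathrm{trace}}$ fails because the Heisenberg--Weyl representation is only strongly (not norm) continuous, so one cannot transfer uniform continuity from a shift-by-$z-z'$ calculation in a manner independent of $z'$. Routing instead through Schwartz approximation and the $L^2 \to L^\infty$ Lipschitz bound for the Wigner map bypasses this issue cleanly. The non-compact-support step is conceptually elementary, but two minor points require care: justifying that the marginals really do take the stated form (via $W\rho \in L^1$ and Fubini applied to the kernel formula (\ref{eqReview23})), and deducing compact support of each individual $|f_\alpha|^2$ from compact support of the non-negative sum.
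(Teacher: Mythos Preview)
The paper does not supply its own proof of this theorem: it is stated in the review section with a citation (``It can be shown that (see e.g.\ \cite{PLA,Janssen})''), so there is no in-paper argument against which to compare your proposal.

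That said, your proof is correct and self-contained. The uniform-continuity half is handled cleanly: the bilinear $L^\infty$ bound on $W(h_1,h_2)$ together with the polarization $Wf-Wg=W(f-g,f)+W(g,f-g)$ gives the $L^2\to L^\infty$ Lipschitz estimate, and the Schwartz-approximation route into $C_0(\mathbb{R}^{2n})$ (which is closed under uniform limits) is the standard way to circumvent the fact that $z\mapsto\widehat T(z)$ is only strongly continuous. The extension to mixed states via the Weierstrass M-test is unproblematic since $\|Wf_\alpha\|_\infty\le(\pi\hbar)^{-n}$ uniformly in $\alpha$.

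The non-compact-support half is also sound. You correctly isolate the two technical points: (i) the marginal identities require $W\rho\in L^1$, which compact support plus $L^2$ gives immediately; (ii) from the vanishing of the non-negative sum $\sum_\alpha p_\alpha|f_\alpha|^2$ outside $\pi_x(K)$ you can legitimately deduce that each $f_\alpha$ is compactly supported. The Paley--Wiener argument then closes the contradiction. One small remark: rather than invoking Paley--Wiener, one can cite the more elementary Benedicks--Amrein--Berthier theorem (a nonzero $L^2$ function and its Fourier transform cannot both be supported on sets of finite measure), which is exactly the obstruction here; but your route via analytic continuation is equally valid.
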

Other results for the support of joint position-momentum (or time-frequency) representations can be found in \cite{Demange} for the ambiguity function and in \cite{Wilczek} for the short-time Fourier transform. The continuous wavelet transform, which is a time-scale representation, was also shown to have non-compact support in \cite{Wilczek}. Ghobber and Jaming \cite{Ghobber1,Ghobber2} derived uncertainty principles for arbitrary integral operators (Fourier, Dunkl, Clifford transforms, etc) which have bounded kernels and satisfy a Plancherel theorem. A sharp version of the Beurling uncertainty principle was proven by B. Demange for the ambiguity function \cite{Demange}.

The most famous version of an uncertainty principle is Heisenberg's uncertainty principle:
\begin{theorem}\label{TheoremReview3}
Let $< \widehat{X}_i>= Tr(\widehat{X}_i \widehat{\rho})$, $<\widehat{P}_i>= Tr(\widehat{P}_i \widehat{\rho})$, $\Delta x_i^2= Tr((\widehat{X}_i-< \widehat{X}_i>\widehat{I})^2 \widehat{\rho})$ and $\Delta p_i^2= Tr((\widehat{P}_i-< \widehat{P}_i>\widehat{I})^2 \widehat{\rho})$ denote the expectation values and the variances of the particle's position and momentum which we assume to be finite. Then:
\begin{equation}
\Delta x_i \Delta p_i \ge \frac{\hbar}{2},
\label{eqReview28}
\end{equation}
for $i=1, \cdots, n$.
\end{theorem}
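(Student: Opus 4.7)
The plan is to deduce this classical inequality directly from the Robertson--Schr\"odinger uncertainty principle (\ref{eqIntroduction4}), which is strictly stronger. First I would reduce to the mean-zero case: the Heisenberg--Weyl operator $\widehat{T}(z_0)$ of (\ref{eqReview12}) is unitary and conjugates $\widehat{X}_i$, $\widehat{P}_i$ into themselves up to additive multiples of $\widehat{I}$, so replacing $\widehat{\rho}$ by $\widehat{T}(z_0) \widehat{\rho} \widehat{T}(z_0)^{-1}$ with an appropriate $z_0$ leaves $\Delta x_i$ and $\Delta p_i$ unchanged while centring the first moments. Hence we may assume $<\widehat{X}_i> = <\widehat{P}_i> = 0$.

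Next I would extract the relevant principal submatrix from the RSUP. Labelling the phase-space coordinates as $(x_1,\dots,x_n,p_1,\dots,p_n)$, the rows and columns with indices $i$ and $n+i$ of $\operatorname*{Cov}(W\rho) + \tfrac{i\hbar}{2}J$ give the $2\times 2$ Hermitian submatrix
\[
M_i = \begin{pmatrix} \Delta x_i^2 & c_i + \tfrac{i\hbar}{2} \\ c_i - \tfrac{i\hbar}{2} & \Delta p_i^2 \end{pmatrix},
\]
where $c_i = [\operatorname*{Cov}(W\rho)]_{i,n+i}$. Inequality (\ref{eqIntroduction4}) forces every principal submatrix of a positive semidefinite Hermitian matrix to be positive semidefinite, so $M_i \ge 0$; its determinant is therefore non-negative, yielding $\Delta x_i^2 \Delta p_i^2 \ge c_i^2 + \tfrac{\hbar^2}{4} \ge \tfrac{\hbar^2}{4}$, which is the claim after taking square roots.

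An alternative, self-contained derivation avoids invoking the RSUP: decompose $\widehat{\rho} = \sum_\alpha p_\alpha \widehat{\rho}_{f_\alpha}$ as in (\ref{eqReview21}); after centring, $\Delta x_i^2 = \sum_\alpha p_\alpha ||\widehat{X}_i f_\alpha||^2$ and similarly for $\Delta p_i^2$. Two applications of Cauchy--Schwarz -- first in $\ell^2(p_\alpha)$ applied to the product $\Delta x_i \cdot \Delta p_i$, then termwise in $L^2(\mathbb{R}^n)$ -- combined with $|\mathrm{Im}\,(\widehat{X}_i f_\alpha|\widehat{P}_i f_\alpha)| = \tfrac{1}{2}|(f_\alpha|[\widehat{X}_i,\widehat{P}_i] f_\alpha)| = \tfrac{\hbar}{2}$, from the canonical commutation relation and $||f_\alpha||=1$, give the bound after summing against $\sum_\alpha p_\alpha = 1$. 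The main technical subtlety in either approach is handling the unbounded operators $\widehat{X}_i$, $\widehat{P}_i$: the hypothesis $\Delta x_i^2, \Delta p_i^2 < \infty$ must be translated into $\widehat{X}_i f_\alpha, \widehat{P}_i f_\alpha \in L^2(\mathbb{R}^n)$ for every $\alpha$ with $p_\alpha > 0$, which follows from $\sum_\alpha p_\alpha ||\widehat{X}_i f_\alpha||^2 = Tr(\widehat{X}_i^2 \widehat{\rho}) = \Delta x_i^2 < \infty$; the commutator manipulations are then legitimate on a suitable dense domain, e.g. via Schwartz functions.
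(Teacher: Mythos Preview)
The paper states Theorem~\ref{TheoremReview3} as a review result and gives no proof of its own, so there is nothing to compare against. Your proposal is correct: both routes work.

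A small remark on the first approach. The centring step via $\widehat{T}(z_0)$ is fine but in fact unnecessary here, because the covariance matrix $\operatorname*{Cov}(W\rho)$ in (\ref{eqReview33}) is already defined with the means subtracted; the diagonal entries at positions $(i,i)$ and $(n+i,n+i)$ are $\Delta x_i^2$ and $\Delta p_i^2$ regardless of whether the first moments vanish. Extracting the $2\times 2$ principal submatrix from (\ref{eqIntroduction4}) then gives the inequality directly. This route is the most natural one in the context of the paper, since the RSUP is precisely the organising principle throughout, and it even yields the slightly sharper Robertson form $\Delta x_i^2\,\Delta p_i^2 \ge c_i^2 + \hbar^2/4$ for free.

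The second, self-contained derivation is also fine. The chain of Cauchy--Schwarz inequalities is correct: first in $\ell^2(p_\alpha)$ to get $\Delta x_i\,\Delta p_i \ge \sum_\alpha p_\alpha\,\|\widehat{X}_i f_\alpha\|\,\|\widehat{P}_i f_\alpha\|$, then termwise in $L^2(\mathbb{R}^n)$ together with the commutator identity. Your comment on domains is the right thing to flag; the finiteness hypothesis does propagate to each $f_\alpha$ with $p_\alpha>0$ exactly as you say, and the pairing $(\widehat{X}_i f_\alpha\mid \widehat{P}_i f_\alpha)$ is then well defined.
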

This theorem does not take into account the correlations $x_ix_j$, $p_ip_j$ or $x_i p_j$. A first generalization would be the Heinig-Smith uncertainty principle \cite{Heinig}:
\begin{theorem}\label{TheoremReview4}
Let $f \in L^2 (\mathbb{R}^n)$, and $\widetilde{f}$ as before, such that
\begin{equation}
d_{ij}= \int_{\mathbb{R}^n} (x_i-<x_i>)(x_j-<x_j>)  |\widetilde{f}(x)|^2 dx
\label{eqReview29}
\end{equation}
and
\begin{equation}
\widetilde{d}_{ij}=   \int_{\mathbb{R}^n} (\omega_i-<\omega_i>)(\omega_j-<\omega_j>)  |(\mathcal{F} \widetilde{f})(\omega)|^2 d \omega
\label{eqReview30}
\end{equation}
are finite for all $i,j=1, \cdots, n$. Here
\begin{equation}
\begin{array}{l}
<x_i>= \int_{\mathbb{R}^n} x_i |\widetilde{f}(x)|^2 dx, \\
\\
<\omega_i>= \int_{\mathbb{R}^n} \omega_i |(\mathcal{F} \widetilde{f})(\omega)|^2 d \omega.
\end{array}
\label{eqReview31.1}
\end{equation}

Then the covariance matrices $D=(d_{ij})_{ij}$ and $\widetilde{D}=(\widetilde{d}_{ij})_{ij}$ satisfy:
\begin{equation}
(\det D)(\det \widetilde{D}) \ge \left(\frac{1}{4 \pi} \right)^{2n}.
\label{eqReview31}
\end{equation}
Moreover, an equality holds if and only if $f$ is a generalized Gaussian of the form:
\begin{equation}
f(x)=e^{- \pi x \cdot A x + 2 \pi b \cdot x + c},
\label{eqReview31.1}
\end{equation}
where $A \in Gl(n, \mathbb{C})$ is symmetric with $Re(A) >0$, and $b \in \mathbb{C}^n$, $c \in \mathbb{C}$.
\end{theorem}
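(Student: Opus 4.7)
The plan is to derive the Heinig--Smith inequality as a consequence of the Robertson--Schr\"odinger uncertainty principle (RSUP) for the Wigner function of $f$, combined with Fischer's determinantal inequality for positive semi-definite block matrices. Without loss of generality assume $\|f\|=1$; the modulation--translation $f(x)\mapsto e^{-2\pi i\langle\omega\rangle\cdot x}f(x+\langle x\rangle)$ preserves both $D$ and $\widetilde{D}$ and brings the means to zero, so one may also assume $\langle x\rangle=0$ and $\langle\omega\rangle=0$.

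Next, associate to $f$ its Wigner function $Wf$ on phase space. The position marginal of $Wf$ is $|f|^2$, with covariance $D$, while the momentum marginal is $|\mathcal{F}_{\hbar}f|^2$; via the change of variables $p=2\pi\hbar\omega$ relating $\mathcal{F}$ to $\mathcal{F}_{\hbar}$, the latter has covariance $(2\pi\hbar)^2\widetilde{D}$. Hence $\operatorname{Cov}(Wf)$ takes the block form
\begin{equation*}
\operatorname{Cov}(Wf)=\begin{pmatrix} D & E \\ E^T & (2\pi\hbar)^2\widetilde{D} \end{pmatrix}
\end{equation*}
for some real $n\times n$ cross-block $E$. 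Applying the RSUP (\ref{eqIntroduction4}) to $Wf$, together with Williamson's normal form (\ref{eqReview8}) and the fact that the symplectic eigenvalues must then satisfy $\lambda_{\sigma,i}\ge\hbar/2$, one obtains $\det\operatorname{Cov}(Wf)\ge(\hbar/2)^{2n}$. On the other hand, Fischer's inequality for positive semi-definite block matrices gives the complementary bound $\det\operatorname{Cov}(Wf)\le(\det D)\cdot(2\pi\hbar)^{2n}\det\widetilde{D}$. Chaining the two estimates yields exactly $(\det D)(\det\widetilde{D})\ge(4\pi)^{-2n}$, which is independent of $\hbar$ as it should be.

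For the equality case, both bounds must be saturated simultaneously. Fischer's inequality is an equality iff the cross-block $E$ vanishes, while the RSUP is saturated iff every Williamson invariant of $\operatorname{Cov}(Wf)$ equals $\hbar/2$, which for a Wigner function forces $f$ to be a pure Gaussian state. One then identifies the explicit Gaussian form: a pure Gaussian $f$ whose Wigner function has vanishing position--momentum cross-covariance admits the parameterization (\ref{eqReview31.1}) with $A$ symmetric complex of positive-definite real part, and $b\in\mathbb{C}^n$, $c\in\mathbb{C}$.

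The main obstacle will be matching the abstract characterization ``pure Gaussian with minimal symplectic spectrum and block-diagonal phase-space covariance'' to the explicit form (\ref{eqReview31.1}); this amounts to parameterizing all saturating Gaussians by computing the Wigner transform of a generic Gaussian wave packet and extracting the algebraic conditions on $A$, $b$, $c$ from the covariance block structure. Modulo this bookkeeping, the remainder of the argument is essentially algebraic.
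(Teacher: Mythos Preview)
The paper does not supply its own proof of this theorem; it is quoted from the literature (reference [Heinig]) in the review section, so there is no in-paper argument to compare against. On its own merits, your route to the inequality (\ref{eqReview31}) is correct: the marginals of $Wf$ are $|f|^2$ and $|\mathcal{F}_\hbar f|^2$, so $\operatorname{Cov}(Wf)$ has diagonal blocks $D$ and $(2\pi\hbar)^2\widetilde D$; the RSUP yields $\det\operatorname{Cov}(Wf)=\prod_j\lambda_{\sigma,j}^2\ge(\hbar/2)^{2n}$, and Fischer's inequality (applicable since the RSUP already forces $\operatorname{Cov}(Wf)>0$) bounds this determinant above by the product of the block determinants. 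Chaining gives (\ref{eqReview31}), with no circularity since the standard derivation of the RSUP is independent of Heinig--Smith.

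The genuine gap is in your equality analysis. Your two saturation conditions --- $E=0$ from Fischer and all $\lambda_{\sigma,j}=\hbar/2$ from the RSUP (hence $f$ a Gaussian pure state, by Theorem~\ref{TheoremReview7}) --- single out a \emph{strictly smaller} class than (\ref{eqReview31.1}) with complex $A$. Already for $n=1$ and $f(x)=e^{-\pi(a_1+ia_2)x^2}$ one computes $D=1/(4\pi a_1)$ and $\widetilde D=(a_1^2+a_2^2)/(4\pi a_1)$, so that $D\widetilde D=(a_1^2+a_2^2)/(4\pi a_1)^2>(4\pi)^{-2}$ whenever $a_2\ne0$; correspondingly the cross-block of $\operatorname{Cov}(Wf)$ is $E=-\hbar a_2/(2a_1)\ne0$. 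Thus Gaussians with genuinely complex $A$ do \emph{not} saturate the inequality, and the ``if'' half of the stated equality characterization cannot be obtained by your method --- nor by any method, since it is false as written. Your argument does recover the correct set of extremizers (real positive-definite $A$, arbitrary $b\in\mathbb{C}^n$, $c\in\mathbb{C}$); rather than claiming that the $E=0$ Gaussians fill out all of (\ref{eqReview31.1}), you should flag that the equality class stated there is too large.
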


\begin{remark}\label{RemarkReview5.1}
The previous theorem also holds for density matrices. Moreover, as in Theorem \ref{TheoremReview3}, we could have assumed immediately that $f$ is normalized $||f||=||\mathcal{F}f ||=1$. We have chosen this version here, because this is how we will need this result below.
\end{remark}

\begin{remark}\label{RemarkReview5}
It will be useful in the sequel to write the Heinig-Smith inequality for functions $F$ defined in the phase space $\mathbb{R}^{2n}$ and express it in terms of the symplectic Fourier transform. Thus, in view of (\ref{eqReview10}):
\begin{equation}
 \operatorname*{Cov} (|\mathcal{F} \widetilde{F}|^2)= \frac{1}{(2 \pi \hbar)^2} J \operatorname*{Cov} (|\mathcal{F}_{\sigma} \widetilde{F}|^2)
J^T .
\label{eqReview32.1}
\end{equation}
Replacing $D$ by $\operatorname*{Cov}(|\widetilde{F}|^2)$, $\widetilde{D}$ by $\operatorname*{Cov} (|\mathcal{F}_{\sigma} \widetilde{F}|^2)$ and $n$ by $2n$ in (\ref{eqReview31}) yields:
\begin{equation}
\det \left(\operatorname*{Cov}(|\widetilde{F}|^2) \right) \det \left(\operatorname*{Cov} (|\mathcal{F}_{\sigma} \widetilde{F}|^2) \right) \ge \left( \frac{\hbar}{2} \right)^{4n}.
\label{eqReview32.2}
\end{equation}
Moreover, the inequality (\ref{eqReview32.2}) becomes an equality if and only if $F$ is of the form:
\begin{equation}
F(z)=e^{- \pi z \cdot A z + 2 \pi b \cdot z + c},
\label{eqReview32.3}
\end{equation}
where $A \in Gl(2n, \mathbb{C})$ is symmetric with $Re(A) >0$, and $b \in \mathbb{C}^{2n}$, $c \in \mathbb{C}$.
\end{remark}

Other uncertainty principles involving quadratic forms were obtained by B. Demange \cite{Demange}.

Theorems \ref{TheoremReview3} and \ref{TheoremReview4} still do not account for the position-momentum correlations. A consequence of this is that they are not invariant under linear (anti-)symplectic transformations. On the other hand, the Robertson-Schr\"odinger uncertainty principle is symplectially invariant \cite{Gosson}.
\begin{theorem}\label{TheoremReview6}
{\bf (Robertson-Schr\"odinger uncertainty principle)} Let $ \operatorname*{Cov} (W \rho)$ be the covariance matrix of $W \rho$ (or $\widehat{\rho}$) with entries:
\begin{equation}
\operatorname*{Cov} (W \rho)= \int_{\mathbb{R}^{2n}} (z-<z>) (z-<z>)^T W \rho (z) dz ,
\label{eqReview33}
\end{equation}
which we assume to be finite. Then we have:
\begin{equation}
 \operatorname*{Cov} (W \rho) + \frac{i\hbar}{2} J \ge 0.
\label{eqReview34}
\end{equation}
That is, the matrix $\operatorname*{Cov} (W \rho) + \frac{i\hbar}{2} J$ is positive in $\mathbb{C}^{2n}$.
\end{theorem}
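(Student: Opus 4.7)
The plan is to reduce the inequality $\operatorname*{Cov}(W\rho) + \frac{i\hbar}{2}J \ge 0$ to the positivity of expectation values of squared self-adjoint operators acting on the state $\widehat{\rho}$, and then translate everything back into phase space via the trace formula (\ref{eqReview27.1}). Fix an arbitrary $\alpha \in \mathbb{C}^{2n}$; by a phase space translation (which shifts means but leaves both the covariance and the positivity of $\widehat{\rho}$ untouched) we may first reduce to the centred case $\langle z \rangle_\rho = 0$. It then suffices to prove $\alpha^{\ast}\bigl(\operatorname*{Cov}(W\rho) + \tfrac{i\hbar}{2}J\bigr)\alpha \ge 0$.

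To that end, introduce the Weyl operators $\widehat{Z}_j$ with real linear symbols $z \mapsto z_j$ (so that $\widehat{Z}_j = \widehat{X}_j$ for $j \le n$ and $\widehat{Z}_j = \widehat{P}_{j-n}$ for $j > n$), and form the complex linear combination
\begin{equation*}
\widehat{B}_{\alpha} := \sum_{j=1}^{2n} \alpha_{j}\,\widehat{Z}_{j}, \qquad \widehat{B}_{\alpha}^{\dagger} = \sum_{k=1}^{2n}\overline{\alpha_{k}}\,\widehat{Z}_{k}.
\end{equation*}
Because $\widehat{\rho}\ge 0$ is trace class, and assuming (as I address below) that $\widehat{B}_{\alpha}\sqrt{\widehat{\rho}}$ is Hilbert-Schmidt, one has the cornerstone inequality
\begin{equation*}
\operatorname{Tr}\bigl(\widehat{B}_{\alpha}^{\dagger}\widehat{B}_{\alpha}\,\widehat{\rho}\bigr) = \|\widehat{B}_{\alpha}\sqrt{\widehat{\rho}}\|_{HS}^{2} \ge 0.
\end{equation*}

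The second step is to compute this trace explicitly. Using the symmetrization $\widehat{Z}_{k}\widehat{Z}_{j} = \tfrac{1}{2}\{\widehat{Z}_{k},\widehat{Z}_{j}\} + \tfrac{1}{2}[\widehat{Z}_{k},\widehat{Z}_{j}]$ together with the canonical commutation relations $[\widehat{Z}_{j},\widehat{Z}_{k}] = i\hbar\, J_{jk}\widehat{I}$ (which follow at once from Weyl calculus applied to the linear symbols $z_{j}, z_{k}$, since the twisted product (\ref{eqReview15}) terminates at first order for linear symbols), I expand
\begin{equation*}
\widehat{B}_{\alpha}^{\dagger}\widehat{B}_{\alpha} = \tfrac{1}{2}\sum_{j,k}\overline{\alpha_{k}}\alpha_{j}\{\widehat{Z}_{k},\widehat{Z}_{j}\} + \tfrac{i\hbar}{2}\,(\alpha^{\ast}J\alpha)\,\widehat{I}.
\end{equation*}
Taking the trace against $\widehat{\rho}$: the symmetrized piece has real Weyl symbol $\sum_{j,k}\overline{\alpha_{k}}\alpha_{j}\, z_{k}z_{j}$ (noting that the twisted anticommutator of linear symbols equals their pointwise symmetric product), so by (\ref{eqReview27.1}) it contributes exactly $\alpha^{\ast}\operatorname*{Cov}(W\rho)\alpha$, while the multiple of the identity contributes $\tfrac{i\hbar}{2}\alpha^{\ast}J\alpha$. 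Combining,
\begin{equation*}
0 \;\le\; \operatorname{Tr}\bigl(\widehat{B}_{\alpha}^{\dagger}\widehat{B}_{\alpha}\,\widehat{\rho}\bigr) \;=\; \alpha^{\ast}\Bigl(\operatorname*{Cov}(W\rho) + \tfrac{i\hbar}{2}J\Bigr)\alpha,
\end{equation*}
which, since $\alpha$ is arbitrary, is the claim.

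The main obstacle is not algebraic but analytic: the $\widehat{Z}_{j}$ are unbounded and there is no a priori reason for $\widehat{B}_{\alpha}^{\dagger}\widehat{B}_{\alpha}\widehat{\rho}$ to be trace class. The hypothesis that the covariance matrix entries are finite is precisely what resolves this. From $\operatorname{Tr}(\widehat{Z}_{j}^{2}\widehat{\rho}) < \infty$ one deduces that each $\widehat{Z}_{j}\sqrt{\widehat{\rho}}$ is Hilbert-Schmidt, hence so is $\widehat{B}_{\alpha}\sqrt{\widehat{\rho}}$, which legitimizes writing $\operatorname{Tr}(\widehat{B}_{\alpha}^{\dagger}\widehat{B}_{\alpha}\widehat{\rho}) = \|\widehat{B}_{\alpha}\sqrt{\widehat{\rho}}\|_{HS}^{2}$ and allows the use of (\ref{eqReview27.1}) componentwise. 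A standard approximation argument (truncating $\widehat{\rho}$ by spectral projectors of $\widehat{Z}_{j}^{2}$, or appealing to the spectral calculus as in the cited references \cite{Birk,Narcow2}) removes any lingering technicality and completes the argument.
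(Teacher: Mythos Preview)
Your proposal is correct, and it is essentially the same argument the paper uses. Note that the paper does not give a standalone proof of Theorem~\ref{TheoremReview6}: it is stated in the review section as a known result. However, the relevant computation appears inside the proof of the refined RSUP (Theorem~\ref{TheoremERSUP2}), where the authors set $\widehat{Y}_j=\widehat{Z}_j-\langle\widehat{Z}_j\rangle\widehat{I}$, form the positive operator $\widehat{A}=(\eta\cdot\widehat{Y})^{\ast}(\eta\cdot\widehat{Y})$, compute its Weyl symbol $a(z)=|\eta\cdot y|^{2}+\tfrac{i\hbar}{2}\sigma(\eta,\overline{\eta})$, and obtain (equation~(\ref{eqERSUP12}))
\[
\int_{\mathbb{R}^{2n}}a(z)\,W\rho(z)\,dz=\overline{\eta}\cdot\Bigl(\operatorname*{Cov}(W\rho)+\tfrac{i\hbar}{2}J\Bigr)\eta \ge 0.
\]
This is precisely your $\operatorname{Tr}(\widehat{B}_{\alpha}^{\dagger}\widehat{B}_{\alpha}\widehat{\rho})\ge 0$. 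The only cosmetic difference is that the paper computes the Weyl symbol of the product $\widehat{Y}_k\widehat{Y}_j$ via the twisted product, whereas you split into anticommutator plus commutator and invoke the CCR; the two are of course equivalent for linear symbols. Your handling of the unboundedness issue (finiteness of $\operatorname*{Cov}(W\rho)$ $\Rightarrow$ $\widehat{Z}_j\sqrt{\widehat{\rho}}$ Hilbert--Schmidt) is the standard one and matches what is implicit in the paper's hypothesis that $\widehat{A}\widehat{\rho}$ be trace class.
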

By diagonalizing $\operatorname*{Cov} (W \rho)$ with the help of Williamson's Theorem and using the symplectic invariance of (\ref{eqReview34}), we conclude that the RSUP is equivalent to \cite{Gosson,Narcow2,Narcow}
\begin{equation}
\lambda_{\sigma,1}\left( \operatorname*{Cov} (W \rho)\right) \ge \frac{\hbar}{2},
\label{eqReview35}
\end{equation}
where $\lambda_{\sigma,1}\left(\operatorname*{Cov} (W \rho)\right)$ is the smallest symplectic eigenvalue of $\operatorname*{Cov} (W \rho)$. The extremal situation
\begin{equation}
\lambda_{\sigma,1}\left(\operatorname*{Cov} (W \rho)\right)=\lambda_{\sigma,2}\left(\operatorname*{Cov} (W \rho)\right)= \cdots=\lambda_{ \sigma,n}\left(\operatorname*{Cov} (W \rho)\right)=\frac{\hbar}{2},
\label{eqReview36}
\end{equation}
corresponds to a minimal uncertainty density matrix. In $\mathcal{W}(\mathbb{R}^{2n})$ this can only be achieved by Gaussian pure states \cite{Gosson}.

\begin{theorem}\label{TheoremReview7}
Let $\operatorname*{Cov} (W \rho)$ satisfy the RSUP (\ref{eqReview34}) with $W \rho \in \mathcal{W} (\mathbb{R}^{2n})$. Then it saturates the uncertainty principle in the sense of (\ref{eqReview36}) if and only if $W \rho =W f$ is the Wigner function of a Gaussian pure state $f$ of the form (\ref{eqReview31.1}).
\end{theorem}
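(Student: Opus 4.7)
The plan is to establish both directions by reducing to the standard ground-state Gaussian $\psi_0(x) = (\pi\hbar)^{-n/4} \exp(-|x|^2/(2\hbar))$, whose Wigner function $W\psi_0(z) = (\pi\hbar)^{-n}\exp(-|z|^2/\hbar)$ has covariance matrix $\tfrac{\hbar}{2} I_{2n}$ by direct Gaussian integration, visibly saturating (\ref{eqReview36}).

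For the ``if'' direction, I would observe that every Gaussian of the form (\ref{eqReview31.1}) can, after absorbing the linear and constant exponents into a Heisenberg-Weyl translation $\widehat{T}(z_0)$ and diagonalizing the complex symmetric form $A$ by a metaplectic operator $\widehat{S}$, be written as $f = \widehat{T}(z_0)\widehat{S}\psi_0$ up to a normalization constant. Combining the symplectic covariance (\ref{eqReview19.1}) with the evident translation law $W(\widehat{T}(z_0)g)(z) = Wg(z-z_0)$, the covariance of $Wf$ equals $S(\tfrac{\hbar}{2} I_{2n})S^T = \tfrac{\hbar}{2}\,SS^T$. Since Williamson's theorem (\ref{eqReview8}) guarantees that the symplectic spectrum is invariant under the congruence $B \mapsto S B S^T$ for $S \in Sp(n)$, the invariants of $\tfrac{\hbar}{2} SS^T$ coincide with those of $\tfrac{\hbar}{2} I_{2n}$, giving exactly (\ref{eqReview36}).

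For the ``only if'' direction, suppose (\ref{eqReview36}) holds. By Williamson's theorem there exists $S \in Sp(n)$ with $S\,\operatorname*{Cov}(W\rho)\,S^T = \tfrac{\hbar}{2}I_{2n}$; using a metaplectic lift $\widehat{S}$ and a Heisenberg-Weyl translation to shift the mean to zero, replace $\widehat{\rho}$ by its unitarily conjugated version, under which the hypothesis $\widehat{\rho} \in \mathcal{S}(L^2(\mathbb{R}^n))$ is preserved. Thus we may assume $\langle z\rangle_{W\rho}=0$ and $\operatorname*{Cov}(W\rho) = \tfrac{\hbar}{2}I_{2n}$. Now introduce the total number operator $\widehat{N} = \sum_{i=1}^n \hat{a}_i^\dagger \hat{a}_i = \tfrac{1}{2\hbar}(|\widehat{X}|^2 + |\widehat{P}|^2) - \tfrac{n}{2}$, whose Weyl symbol is $a(z) = \tfrac{|z|^2}{2\hbar} - \tfrac{n}{2}$. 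Using (\ref{eqReview27.1}),
\[
\operatorname*{Tr}(\widehat{N}\widehat{\rho}) = \int_{\mathbb{R}^{2n}}\!\!\left(\frac{|z|^2}{2\hbar} - \frac{n}{2}\right)\! W\rho(z)\,dz = \frac{1}{2\hbar}\operatorname*{Tr}\!\left(\operatorname*{Cov}(W\rho)\right) - \frac{n}{2} = \frac{n\hbar}{2\hbar} - \frac{n}{2} = 0.
\]
Since $\widehat{N}\ge 0$ has one-dimensional kernel $\mathbb{C}\psi_0$, decomposing $\widehat{\rho} = \sum_k \lambda_k |\phi_k\rangle\langle \phi_k|$ spectrally with $\lambda_k>0$, the identity $\sum_k \lambda_k (\phi_k|\widehat{N}\phi_k)=0$ together with positivity of each term forces every $\phi_k$ into $\ker\widehat{N}$, hence $\widehat{\rho} = |\psi_0\rangle\langle\psi_0|$. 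Reversing the unitary conjugations exhibits the original $\widehat{\rho}$ as $|f\rangle\langle f|$ with $f = \widehat{T}(\langle z\rangle)\widehat{S}^{-1}\psi_0$, and since metaplectic and Heisenberg-Weyl operators preserve the generalized Gaussian class, $f$ has the form (\ref{eqReview31.1}).

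The main obstacle I anticipate is rigorously justifying the trace computation $\operatorname*{Tr}(\widehat{N}\widehat{\rho})$ when $\widehat{\rho}$ is only assumed to have finite second moments: the symbol of $\widehat{N}$ grows quadratically, so $\widehat{N}\widehat{\rho}$ being trace class is not immediate from $\widehat{\rho} \in \mathcal{S}(L^2)$ alone and requires that the spectral eigenvectors $\phi_k$ lie in the form domain of $|\widehat{X}|^2 + |\widehat{P}|^2$. This should follow from finiteness of $\operatorname*{Cov}(W\rho)$ (which encodes precisely $\sum_k \lambda_k(\phi_k | (|\widehat{X}|^2 + |\widehat{P}|^2)\phi_k) < \infty$), but one may prefer to approximate $\widehat{N}$ by bounded spectral truncations $\widehat{N}_M = \widehat{N}\,\chi_{[0,M]}(\widehat{N})$ and pass to the limit via monotone convergence, which sidesteps any technicality about the unbounded product.
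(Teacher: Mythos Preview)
The paper does not supply its own proof of this theorem: it is stated in the review section as a known result with a reference to \cite{Gosson}, so there is no in-text argument to compare against. That said, your proposal is mathematically sound and follows the natural strategy one finds in the literature.

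Both directions are handled correctly. For the ``if'' part, your reduction to $\psi_0$ via a metaplectic-plus-Heisenberg factorisation and the symplectic invariance of the Williamson spectrum is the standard route and matches the content of Remark~\ref{RemarkLittlejohn}. For the ``only if'' part, the number-operator argument is clean: after Williamson-normalising the covariance to $\tfrac{\hbar}{2}I_{2n}$ and centring the mean, the identity $\operatorname*{Tr}(\widehat{N}\widehat{\rho})=0$ together with $\widehat{N}\ge 0$ and $\dim\ker\widehat{N}=1$ indeed pins $\widehat{\rho}$ to the rank-one projector onto $\psi_0$. The spectral-truncation workaround you sketch for the unbounded trace is exactly the right fix; alternatively, one can observe that finiteness of $\operatorname*{Cov}(W\rho)$ is equivalent to $\widehat{\rho}^{1/2}$ mapping $L^2$ into the form domain of $\widehat{N}$, which is already enough to make $\sum_k \lambda_k(\phi_k|\widehat{N}\phi_k)$ a convergent sum of nonnegative terms.

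One small point worth tightening: in the ``if'' direction you assert that any Gaussian of the form (\ref{eqReview31.1}) arises as $\widehat{T}(z_0)\widehat{S}\psi_0$. This is true, but the ``diagonalising the complex symmetric form $A$ by a metaplectic operator'' step deserves a word, since $A$ has an imaginary part and one is really invoking the fact that metaplectic operators act transitively on the Siegel upper half-space of admissible complex quadratic exponents. Citing Littlejohn/Bastiaans (Remark~\ref{RemarkLittlejohn}) or the explicit generators of $Mp(n)$ would make this watertight.
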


\begin{remark}\label{RemarkLittlejohn}
The Wigner function of a Gaussian pure state (\ref{eqReview31.1}) can be expressed as
\begin{equation}
W f(z)= \frac{1}{( \pi \hbar)^n} \exp \left( - \frac{1}{2} (z-z_0) \cdot \left( \operatorname*{Cov} (Wf) \right)^{-1} (z-z_0)\right),
\label{eqReview37}
\end{equation}
where $z_0 \in \mathbb{R}^{2n}$ and the covariance matrix $ \operatorname*{Cov} (Wf)$ is a real symmetric positive-definite $2 n \times 2n$ matrix such that
\begin{equation}
\frac{2}{\hbar} \operatorname*{Cov} (Wf) \in Sp(n).
\label{eqReview38}
\end{equation}
This is known by physicists as Littlejohn's Theorem \cite{Littlejohn} but was first proven by Bastiaans \cite{Bastiaans}.
\end{remark}

Theorem \ref{TheoremReview7} is valid in $\mathcal{W}(\mathbb{R}^{2n})$ but not in $L^2(\mathbb{R}^{2n})$. In fact, the RSUP is only a necessary condition for a real phase space function $F$ to be a Wigner function. However, it is not sufficient (not even if saturated). Here is a counter-example.
\begin{example}\label{ExampleReview9}
Let $F$ be the function on $\mathbb{R}^2$ defined by
\begin{equation}
F(z)=\frac{1}{\pi R^2} \chi_R (z),
\label{eqReview39}
\end{equation}
where $\chi_R (z)$ is the indicator function of the disc of radius $R$ centered at the origin
\begin{equation}
\chi_R (z)= \left\{
\begin{array}{l l}
1 & if ~ |z| \le R\\
0 & if ~|z| >R
\end{array}
\right. ~.
\label{eqReview40}
\end{equation}
The function $F$ is real and normalized. However, it cannot possibly be a Wigner function, because it is discontinuous and because it has compact support. But, as we now show, it can nevertheless satisfy the Robertson-Schr\"odinger uncertainty principle, or even saturate it, provided we choose the radius $R$ appropriately.

A simple calculation shows that the covariance matrix of $F$ is
\begin{equation}
 \operatorname*{Cov} (F)= \frac{R^2}{4} I,
\label{eqReview41}
\end{equation}
where $I$ is the identity matrix. The Williamson invariant of $\operatorname*{Cov} (F)$ is
\begin{equation}
\lambda_{\sigma,1}(\operatorname*{Cov}(F)) = \frac{R^2}{4}.
\label{eqReview42}
\end{equation}
So the Robertson-Schr\"odinger uncertainty principle is satisfied, if and only if
\begin{equation}
R \ge \sqrt{2 \hbar},
\label{eqReview43}
\end{equation}
and saturated provided
\begin{equation}
R = \sqrt{2 \hbar}.
\label{eqReview44}
\end{equation}
In higher dimension $n >1$, we may consider the tensor products
\begin{equation}
F(z)= \Pi_{j=1}^n \frac{1}{\pi R^2} \chi_R (z_j).
\label{eqReview44}
\end{equation}
Again, if (\ref{eqReview43}) holds, then $F$ satisfies the RSUP and it saturates it for (\ref{eqReview44}).
\end{example}

Thus, as we argued in the introduction, the only imprint of quantum mechanics in the RSUP is a scale requirement related to Planck's constant. Indeed, we have the more dramatic result that, provided the covariance matrix is finite and positive-definite, then any phase space function satisfies the RSUP after a scale transformation.

\begin{lemma}\label{LemmaReview10}
Let $F: \mathbb{R}^{2n} \to \mathbb{R}$ be a normalized measurable function such that its covariance matrix $\operatorname*{Cov}(F)$ is finite and positive-definite. Then there exists $0<\mu \le 1 $ such that $F_{\mu}(z)=  \mu^{2n} F(\mu z)$ satisfies the RSUP.
\end{lemma}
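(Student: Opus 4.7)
The plan is to reduce the problem to the equivalent formulation of the RSUP in terms of Williamson symplectic eigenvalues (equation (\ref{eqReview35})), and then to exploit the simple scaling of the covariance matrix under the dilation $F \mapsto F_\mu$.

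First I would verify that the dilation preserves normalization: a change of variable $u = \mu z$ gives
\begin{equation}
\int_{\mathbb{R}^{2n}} F_\mu(z)\,dz = \mu^{2n}\int_{\mathbb{R}^{2n}} F(\mu z)\,dz = \int_{\mathbb{R}^{2n}} F(u)\,du = 1,
\end{equation}
so that $\widetilde{F_\mu} = F_\mu$. The same substitution yields $\langle z\rangle_{F_\mu} = \mu^{-1}\langle z\rangle_F$ and, inserting this into definition (\ref{eqIntroduction6}),
\begin{equation}
\operatorname*{Cov}(F_\mu) = \frac{1}{\mu^2}\operatorname*{Cov}(F).
\end{equation}

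Next, I would invoke the characterization (\ref{eqReview35}) of the RSUP: the inequality $\operatorname*{Cov}(F_\mu) + \tfrac{i\hbar}{2}J \ge 0$ is equivalent to $\lambda_{\sigma,1}(\operatorname*{Cov}(F_\mu)) \ge \tfrac{\hbar}{2}$. Since symplectic eigenvalues are positively homogeneous of degree one in the matrix, the scaling above gives
\begin{equation}
\lambda_{\sigma,1}(\operatorname*{Cov}(F_\mu)) = \frac{1}{\mu^2}\,\lambda_{\sigma,1}(\operatorname*{Cov}(F)),
\end{equation}
and the RSUP for $F_\mu$ becomes the scalar condition $\mu^2 \le \frac{2}{\hbar}\,\lambda_{\sigma,1}(\operatorname*{Cov}(F))$.

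Finally, because $\operatorname*{Cov}(F)$ is assumed positive-definite, each of its symplectic eigenvalues is strictly positive (the moduli of the eigenvalues of $\operatorname*{Cov}(F)J^{-1}$ cannot vanish when $\operatorname*{Cov}(F)$ is invertible). Hence $\lambda_{\sigma,1}(\operatorname*{Cov}(F)) > 0$, and I would conclude by setting
\begin{equation}
\mu := \min\!\left(1,\;\sqrt{\tfrac{2}{\hbar}\,\lambda_{\sigma,1}(\operatorname*{Cov}(F))}\right) \in (0,1],
\end{equation}
which satisfies the required scalar inequality and thus makes $F_\mu$ obey the RSUP. There is no real obstacle here; the only point that deserves a brief justification is the homogeneity of the symplectic spectrum, which follows either directly from Williamson's normal form (\ref{eqReview8}) or from the definition of $\lambda_{\sigma,j}$ as the moduli of eigenvalues of $BJ^{-1}$.
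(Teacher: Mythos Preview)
Your proof is correct and follows essentially the same route as the paper: both reduce the RSUP to the symplectic-eigenvalue condition $\lambda_{\sigma,1}\!\left(\operatorname*{Cov}(F_\mu)\right)\ge \tfrac{\hbar}{2}$, use the scaling $\operatorname*{Cov}(F_\mu)=\mu^{-2}\operatorname*{Cov}(F)$, and then pick $\mu$ small enough. The only cosmetic difference is that you package the two cases into the single formula $\mu=\min\bigl(1,\sqrt{2\lambda_{\sigma,1}/\hbar}\bigr)$, whereas the paper treats them separately.
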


\begin{proof}
Let $\lambda_{\sigma,1} \left(  \operatorname*{Cov}(F)\right)$ denote the smallest Williamson invariant of $\operatorname*{Cov} (F)$. If $\lambda_{\sigma,1} \left( \operatorname*{Cov}(F)\right) \ge \frac{\hbar}{2}$, we choose $\mu=1$ and we are done. Alternatively, suppose that $\lambda_{\sigma,1} \left( \operatorname*{Cov} (F)\right) < \frac{\hbar}{2}$. Since $\operatorname*{Cov} (F_{\mu}) = \frac{1}{\mu^2} \operatorname*{Cov} (F)$, we conclude that $\lambda_{\sigma,1} \left( \operatorname*{Cov} (F_{\mu})\right)= \frac{ \lambda_{\sigma,1} \left( \operatorname*{Cov} (F)\right)}{\mu^2}$. If we choose
\begin{equation}
0< \mu < \sqrt{\frac{2 \lambda_{\sigma,1} \left( \operatorname*{Cov} (F)\right)}{\hbar}} <1,
\label{eqReview45}
\end{equation}
then $F_{\mu}$ satisfies the RSUP.
\end{proof}

\subsection{Modulation spaces}

To conclude this section, we address the question of finiteness of the covariance matrix elements of a given function. The proper setting in this respect is that of Feichtinger's modulation spaces \cite{Hans1,fei81}\footnote{For a detailed review see \cite{Gro}; we are using here their formulation in terms of the Wigner
distribution as in \cite{Birk}.}. These are a class of functional spaces which, roughly speaking, describe the integrability, decay and smoothness properties of a function and its Fourier transform.

Let $\langle z\rangle=(1+|z|^2)^{1/2}$; we will
call $\langle\cdot\rangle$ the standard weight function. The modulation space
$M_{s}^{q}(\mathbb{R}^{n})$ consists of all distributions $f\in
\mathcal{S}^{\prime}(\mathbb{R}^{n})$ such that $W(f,g)\in L_{s}%
^{q}(\mathbb{R}^{2n})$ for all $g \in\mathcal{S}(\mathbb{R}^{n}) \backslash \left\{0 \right\}$. Here
$L_{s}^{q}(\mathbb{R}^{2n})$ is the space of all functions $F$ on
$\mathbb{R}^{2n}$ such that%
\begin{equation}
||F||_{L_{s}^{q}}=\left(  \int_{\mathbb{R}^{2n}}\left(  \langle z\rangle
^{s}|F(z)|\right)^{q}dz\right)  ^{1/q}<\infty.
\label{eqReview46}
\end{equation}
One shows that $M_{s}^{q}(\mathbb{R}^{n})$ is a Banach space for the norms%
\begin{equation}
||f||_{g,M_{s}^{q}}=||W(f,g)||_{L_{s}^{q}};
\label{eqReview47}
\end{equation}
these norms are in fact all equivalent for different choices of window $g$, so that the condition $f\in
M_{s}^{q}(\mathbb{R}^{n})$ holds if $W(f,g)\in L_{s}^{q}(\mathbb{R}%
^{2n})$ for one $g \in\mathcal{S}(\mathbb{R}^{n})\backslash \left\{0 \right\}$; even more surprisingly,
we have $f\in M_{s}^{q}(\mathbb{R}^{n})$ if and only if $Wf =W(f,f) \in L_{s}%
^{q}(\mathbb{R}^{2n})$ (but it is of course not immediately obvious from this
characterization that $M_{s}^{q}(\mathbb{R}^{n})$ is a vector space!). The
class of modulation spaces contain as particular cases several well-known
function spaces. For instance, the Shubin class%
\begin{equation}
Q^{s}(\mathbb{R}^{n})=L_{s}^{2}(\mathbb{R}^{n})\cap H^{s}(\mathbb{R}^{n}),
\label{eqReview48}
\end{equation}
corresponds to $M_{s}^{2} (\mathbb{R}^n)$. In particular, it can be shown that:
\begin{equation}
M_{1}^{2} (\mathbb{R}^n) \simeq \left\{f \in \mathcal{S}^{\prime} (\mathbb{R}^n): ~ \int_{\mathbb{R}^n} (1+|x|^2) \left(|f(x)|^2 + |(\mathcal{F}f)(x)|^2 \right) dx < \infty \right\}.
\label{eqReview49}
\end{equation}

The case $q=1$, $s=0$ is also noteworthy. The corresponding
modulation space $M_{0}^{1}(\mathbb{R}^{n})$ is called Feichtinger's algebra
and is usually denoted by $S_{0}(\mathbb{R}^{n})$. The Feichtinger algebra is
an algebra for both pointwise multiplication and convolution. One proves that
$S_{0}(\mathbb{R}^{n})$ is the smallest Banach space containing $\mathcal{S}%
(\mathbb{R}^{n})$ and which is invariant under the action of metaplectic
operators and translations. We have the inclusion%
\begin{equation}
S_{0}(\mathbb{R}^{n})\subset C^{0}(\mathbb{R}^{n})\cap L^{1}(\mathbb{R}%
^{n})\cap \mathcal{F} L^{1}(\mathbb{R}^{n}).
\label{eqReview50}
\end{equation}

The modulation spaces $M_{s}^{q}(\mathbb{R}^{n})$ have similar properties:

\begin{proposition}
(i) Each space $M_{s}^{q}(\mathbb{R}^{n})$ is invariant under the action of
the Heisenberg-Weyl operators $\widehat{T}(z)$ and there exists a constant $C>0$ such that%
\begin{equation}
||\widehat{T}(z)f||_{g,M_{s}^{q}}\leq C\langle z\rangle^{s}%
||f||_{g ,M_{s}^{q}};
\label{eqReview51}
\end{equation}
(ii) If $\widehat{S}\in\operatorname*{Mp}(n)$ and $f \in M_{s}%
^{q}(\mathbb{R}^{n})$ then $\widehat{S} f \in M_{s}^{q}(\mathbb{R}^{n})$;

\noindent
(iii) $\mathcal{S}(\mathbb{R}^{n})$ is dense in each of the spaces $M_{s}%
^{q}(\mathbb{R}^{n})$ and we have%
\begin{equation}
\mathcal{S}(\mathbb{R}^{n})=\cap_{s\geq0}M_{s}^{2}(\mathbb{R}^{n}).
\label{eqReview52}
\end{equation}
\end{proposition}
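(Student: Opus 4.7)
The plan is to prove all three parts by leveraging the diagonal characterization $f \in M_s^q(\mathbb{R}^n) \iff Wf \in L_s^q(\mathbb{R}^{2n})$ recalled just before the statement, together with the translation and symplectic covariance identities for the Wigner transform. The key identities are $W(\widehat{T}(z_0)f)(z) = Wf(z - z_0)$ (which I would verify by direct substitution of (\ref{eqReview12}) into (\ref{eqReview20}), with the phase factors attached to $f$ and $\overline{f}$ combining to shift $p \mapsto p - p_0$), the symplectic covariance (\ref{eqReview19.1}), and Peetre's inequality $\langle z + z_0\rangle^s \le 2^{s/2}\langle z\rangle^s\langle z_0\rangle^s$ valid for $s \ge 0$.

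For (i), I would start from the diagonal characterization: the substitution $z \mapsto z - z_0$ in $\|W(\widehat{T}(z_0)f)\|_{L_s^q}^q = \int\langle z\rangle^{sq}|Wf(z-z_0)|^q\,dz$ followed by Peetre's inequality gives $\|W(\widehat{T}(z_0)f)\|_{L_s^q} \le C\langle z_0\rangle^s\|Wf\|_{L_s^q}$. To transfer this to the actual window norm $\|\cdot\|_{g, M_s^q}$, I would use the modulated cross-Wigner formula $W(\widehat{T}(z_0)f, g)(z) = W(f, \widehat{T}(-z_0)g)(z-z_0)$ together with the standard change-of-window estimate $\|W(f,g')\|_{L_s^q} \le C\|W(g',g)\|_{L_s^1}\|W(f,g)\|_{L_s^q}$ (a Young-type convolution bound for the STFT), absorbing the $z_0$-dependence of $W(\widehat{T}(-z_0)g, g)$ into the $\langle z_0\rangle^s$ factor via another application of Peetre.

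For (ii), the argument is cleaner. Using (\ref{eqReview19.1}), a change of variables (with $|\det S| = 1$ since $S \in Sp(n)$) yields $\|W(\widehat{S}f)\|_{L_s^q}^q = \int\langle Sz\rangle^{sq}|Wf(z)|^q\,dz$, and the elementary bound $\langle Sz\rangle \le (1+\|S\|^2)^{1/2}\langle z\rangle$ gives $\widehat{S}f \in M_s^q$ with controlled norm. For (iii), density of $\mathcal{S}(\mathbb{R}^n)$ in $M_s^q$ I would obtain by Gaussian regularization: truncate $f$ in phase space (via a smooth compactly-supported multiplier applied to $W(f,g)$) and convolve with a Schwartz mollifier on the configuration side, producing $f_\varepsilon \in \mathcal{S}$ with $\|f - f_\varepsilon\|_{g, M_s^q} \to 0$ by dominated convergence applied on the Wigner side. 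For the identification $\mathcal{S}(\mathbb{R}^n) = \cap_{s \ge 0} M_s^2(\mathbb{R}^n)$, the inclusion $\subseteq$ is immediate since $f \in \mathcal{S}$ implies $Wf \in \mathcal{S}(\mathbb{R}^{2n}) \subset L_s^2$ for every $s$. For $\supseteq$, I would invoke the Shubin-space identification $M_s^2 = Q^s = L_s^2 \cap H^s$ recorded in (\ref{eqReview48}), so that $\cap_s M_s^2 = \cap_s(L_s^2 \cap H^s)$, after which the standard bootstrap — polynomial $L^2$-decay in every weight combined with Sobolev regularity of every order, allowing mixed control of $\langle x\rangle^\alpha \partial^\beta f$ in $L^2$ — gives $f \in \mathcal{S}$.

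The main obstacle I anticipate is (iii), in particular making the density argument rigorous. Producing Schwartz approximants whose cross-Wigner converges in $L_s^q$ requires simultaneously controlling smoothness (for high Sobolev exponents, affecting the decay of $W(f_\varepsilon - f,g)$ in the $p$-variable) and spatial decay (for the polynomial weight $\langle z\rangle^s$), and the cleanest route — a Gabor-frame truncation with Schwartz window — leans on nontrivial time-frequency machinery. The bootstrap $\cap_s M_s^2 = \mathcal{S}$ is conceptually routine but also requires care to ensure that the combined control of $\langle x\rangle^s f$ and $(1-\Delta)^{s/2}f$ in $L^2$, for every $s$, really does translate into uniform bounds on all mixed norms $\|\langle x\rangle^\alpha \partial^\beta f\|_{L^2}$ via interpolation.
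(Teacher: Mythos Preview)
The paper does not prove this proposition. It appears in the review subsection on modulation spaces as a statement of known facts, with implicit reference to the standard sources \cite{Hans1,fei81,Gro,Birk}; immediately after the proposition the text simply resumes with ``We remark that the Feichtinger algebra\ldots''. So there is no proof in the paper to compare against.

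Your sketch is essentially the standard route found in those references (particularly Gr\"ochenig's book). Part (i) is exactly as you describe: translation covariance of the cross-Wigner transform plus Peetre's inequality, with the change-of-window lemma handling the passage between norms. Part (ii) via (\ref{eqReview19.1}) and the linear weight bound is correct. For part (iii), the density is indeed most cleanly obtained through Gabor-frame truncation with a Schwartz window rather than ad hoc mollification, and your identification of $\cap_{s}M_s^2$ with $\mathcal{S}$ via the Shubin characterization (\ref{eqReview48}) is the right idea. The obstacles you flag are real but well known to be surmountable; your proposal would be accepted as a correct outline of the textbook proof.
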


We remark that the Feichtinger algebra $S_{0}(\mathbb{R}^{n})=M_{0}^{1}(\mathbb{R}^{n})$ is the smallest
algebra containg the Schwartz functions and having properties (i) and (ii) above.

\section{The refined Robertson-Schr\"odinger uncertainty principle}

To prove our main theorem, we need the following two preliminary results.

\begin{proposition}\label{PropositionERSUP1}
Let $\widehat{A} \overset{\mathrm{Weyl}}{\longleftrightarrow} a$ be a positive Weyl operator with symbol $a \in \mathcal{S}^{\prime} (\mathbb{R}^{2n})$, and let $W \rho$ be the Wigner function associated with the density matrix $\widehat{\rho}$. If $\widehat{A}\widehat{\rho}$ is trace-class, then we have
\begin{equation}
\frac{1}{(2 \pi \hbar)^n} \int_{\mathbb{R}^{2n}} a(z) W \rho (z) dz \ge \int_{\mathbb{R}^{2n}} a(z) (W \rho (z) \star_{\hbar} W \rho (z)) dz \ge 0 ,
\label{eqERSUP1}
\end{equation}
where $\star_{\hbar}$ denotes the Moyal product. Moreover, the first inequality bccomes an equality if and only if the state is pure.
\end{proposition}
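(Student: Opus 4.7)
The plan is to rewrite the three quantities as operator traces via (\ref{eqReview27}) and (\ref{eqReview27.1}), and then reduce the asserted inequality to the spectral bound $0\le\widehat{\rho}\le I$ satisfied by every density matrix. Comparing (\ref{eqReview13}) with (\ref{eqReview20}) shows that the Wigner function $W\rho$ is itself the Weyl symbol of the rescaled operator $\widehat{B}:=(2\pi\hbar)^{-n}\widehat{\rho}$, so by definition of the Moyal product, $W\rho\star_{\hbar}W\rho$ is the Weyl symbol of $\widehat{B}^{\,2}=(2\pi\hbar)^{-2n}\widehat{\rho}^{\,2}$. Applying (\ref{eqReview27.1}) to the first integral and (\ref{eqReview27}) to $\widehat{A}\widehat{B}^{\,2}$ for the Moyal-product integral yields
\[
\int_{\mathbb{R}^{2n}} a(z)\,W\rho(z)\,dz = Tr(\widehat{A}\widehat{\rho}),
\qquad
\int_{\mathbb{R}^{2n}} a(z)\,(W\rho\star_{\hbar}W\rho)(z)\,dz = \frac{1}{(2\pi\hbar)^n}\,Tr(\widehat{A}\widehat{\rho}^{\,2}),
\]
so that the asserted chain of inequalities is equivalent to $Tr(\widehat{A}\widehat{\rho})\ge Tr(\widehat{A}\widehat{\rho}^{\,2})\ge 0$.

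Both positivity claims are routine Hilbert--Schmidt manipulations. Since $\widehat{A}\ge 0$, its positive square root $\widehat{A}^{1/2}$ is defined; using self-adjointness of $\widehat{\rho}$ together with cyclicity of the trace,
\[
Tr(\widehat{A}\widehat{\rho}^{\,2}) = Tr(\widehat{\rho}\widehat{A}\widehat{\rho}) = ||\widehat{A}^{1/2}\widehat{\rho}||_{HS}^{2}\ge 0,
\]
which gives the right-hand inequality. For the left-hand inequality, the spectrum of the density matrix $\widehat{\rho}$ lies in $[0,1]$, so $\widehat{\rho}$ and $I-\widehat{\rho}$ are commuting positive operators and $\widehat{\rho}(I-\widehat{\rho})\ge 0$. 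Letting $\widehat{C}:=(\widehat{\rho}(I-\widehat{\rho}))^{1/2}$ denote its self-adjoint positive square root, we have $\widehat{C}^{\,2}=\widehat{\rho}-\widehat{\rho}^{\,2}$, hence
\[
Tr\bigl(\widehat{A}(\widehat{\rho}-\widehat{\rho}^{\,2})\bigr) = Tr(\widehat{C}\widehat{A}\widehat{C}) = ||\widehat{A}^{1/2}\widehat{C}||_{HS}^{2}\ge 0.
\]

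For the equality clause, the ``if'' direction is immediate: a pure state satisfies $\widehat{\rho}^{\,2}=\widehat{\rho}$, so the first inequality becomes an equality for every admissible $\widehat{A}$. For the ``only if'' direction, one reads off equality at the level of $\widehat{A}=I$ (which is admissible since $\widehat{A}\widehat{\rho}=\widehat{\rho}$ is then trace class by the hypothesis on $\widehat{\rho}$ itself), obtaining $Tr(\widehat{\rho})=Tr(\widehat{\rho}^{\,2})$; combined with the operator positivity $\widehat{\rho}-\widehat{\rho}^{\,2}\ge 0$ and the normalization $Tr(\widehat{\rho})=1$, this forces $\widehat{\rho}$ to be a rank-one projection. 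The one delicate point in the proof is the careful bookkeeping of the $(2\pi\hbar)^{-n}$ factor between the Wigner function $W\rho$ and the Weyl symbol of $\widehat{\rho}$: getting this conversion right is what identifies the Moyal-product integral with $(2\pi\hbar)^{-n}Tr(\widehat{A}\widehat{\rho}^{\,2})$, after which the whole argument collapses to the spectral inclusion $0\le\widehat{\rho}\le I$.
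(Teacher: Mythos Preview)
Your reduction of the two integrals to operator traces and the derivation of $Tr(\widehat{A}\widehat{\rho})\ge Tr(\widehat{A}\widehat{\rho}^{\,2})\ge 0$ is correct and is essentially the paper's argument: the paper writes the spectral decomposition $\widehat{\rho}=\sum_{\alpha}\lambda_{\alpha}\widehat{P}_{\alpha}$ and compares $\sum_{\alpha}\lambda_{\alpha}^{2}\,Tr(\widehat{A}\widehat{P}_{\alpha})$ with $\sum_{\alpha}\lambda_{\alpha}\,Tr(\widehat{A}\widehat{P}_{\alpha})$ using $0<\lambda_{\alpha}\le 1$, which is your operator inequality $\widehat{\rho}-\widehat{\rho}^{\,2}\ge 0$ read eigenvalue by eigenvalue. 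The square-root packaging via $\widehat{A}^{1/2}$ and $\widehat{C}=(\widehat{\rho}-\widehat{\rho}^{\,2})^{1/2}$ is a harmless cosmetic variation.

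There is, however, a gap in your treatment of the ``only if'' direction of the equality clause. The proposition fixes \emph{one} positive operator $\widehat{A}$ and asserts that equality in the first inequality for \emph{that} $\widehat{A}$ forces $\widehat{\rho}$ to be pure. You instead substitute $\widehat{A}=I$ and read off $Tr(\widehat{\rho})=Tr(\widehat{\rho}^{\,2})$; but the hypothesis gives you equality only for the given $\widehat{A}$, not for $I$, so this move is unjustified. The paper stays with the given $\widehat{A}$: from
\[
\sum_{\alpha}(\lambda_{\alpha}-\lambda_{\alpha}^{2})\,Tr(\widehat{A}\widehat{P}_{\alpha})=0,
\]
together with $\lambda_{\alpha}-\lambda_{\alpha}^{2}\ge 0$ and $Tr(\widehat{A}\widehat{P}_{\alpha})\ge 0$, it deduces that each $\lambda_{\alpha}\in\{0,1\}$ and hence that $\widehat{\rho}$ is a rank-one projection. (Strictly speaking this last step tacitly uses $Tr(\widehat{A}\widehat{P}_{\alpha})>0$ on the relevant eigenspaces; the same caveat would apply to any repair of your argument that tries to conclude $\widehat{C}=0$ from $\|\widehat{A}^{1/2}\widehat{C}\|_{HS}=0$ for the fixed $\widehat{A}$.)
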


\begin{proof}
A density matrix is a trace class operator and hence compact. Thus, it admits the following spectral decomposition \cite{Gosson}:
\begin{equation}
\widehat{\rho} = \sum_{\alpha} \lambda_{\alpha} \widehat{P}_{\alpha},
\label{eqERSUP2}
\end{equation}
where $(\lambda_{\alpha})_{\alpha}$ are the eigenvalues of $\widehat{\rho}$, with
\begin{equation}
\lambda_{\alpha}>0, \hspace{1 cm} \sum_{\alpha} \lambda_{\alpha}=1.
\label{eqERSUP3}
\end{equation}
Here $\widehat{P}_{\alpha}$ is the orthogonal projection onto the eigenspace associated with the eigenvalue $\lambda_{\alpha}$.
Since
\begin{equation}
\widehat{\rho}^2 = \sum_{\alpha} \lambda_{\alpha}^2 \widehat{P}_{\alpha},
\label{eqERSUP4}
\end{equation}
we have by linearity, the positivity of $\widehat{A} $, convergence in the trace norm and the fact that $0 < \lambda_{\alpha} \le 1$:
\begin{equation}
0 \le Tr(\widehat{A}\widehat{\rho}^2) = \sum_{\alpha} \lambda_{\alpha}^2 Tr(\widehat{A}\widehat{P}_{\alpha}) \le \sum_{\alpha} \lambda_{\alpha} Tr(\widehat{A}\widehat{P}_{\alpha}) =Tr(\widehat{A}\widehat{\rho}) .
\label{eqERSUP5}
\end{equation}
Finally, an equality holds if and only if $\lambda_{\alpha}=0$ or $\lambda_{\alpha}=1$ for all $\alpha$. This is possible for a normalized state if and only if the state is pure. From (\ref{eqReview15},\ref{eqReview27}), we then recover (\ref{eqERSUP1}).
\end{proof}

The following technical result will also be useful

\begin{proposition}\label{PropositionERSUP2}
Let $F \in M_1^2 (\mathbb{R}^{2n})$ and $a(z)= \eta \cdot(z-z_0)$ for fixed $\eta \in \mathbb{C}^{2n}$ and $z_0 \in \mathbb{R}^{2n}$. Then the following identity holds:
\begin{equation}
\begin{array}{c}
\frac{1}{2} \int_{\mathbb{R}^{2n}} \left(|a \star_{\hbar}F|^2 + |F \star_{\hbar}a|^2 \right) dz=\\
\\
= \int_{\mathbb{R}^{2n}} \left(|a (z) |^2 |F(z)|^2 + \frac{ |\eta \cdot z|^2}{4} | (\mathcal{F}_{\sigma} F)(z)|^2 \right) dz.
\end{array}
\label{eqERSUP5.1}
\end{equation}
\end{proposition}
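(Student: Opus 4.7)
The plan is to exploit the fact that when one factor in a Moyal product is an affine symbol, the product reduces to an explicit first-order differential expression (a Bopp shift). I would first compute $a\star_\hbar F$ and $F\star_\hbar a$ in closed form, then observe that the parallelogram identity makes the pointwise sum $|a\star_\hbar F|^2+|F\star_\hbar a|^2$ split cleanly into two non-mixing terms without any integration, and finally convert the resulting gradient term into a multiplicative expression in $\mathcal{F}_\sigma F$ via the symplectic Plancherel theorem.

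For the first step, starting from the integral formula (\ref{eqReview15}) and Taylor-expanding the affine factor $a(z\pm u/2)$ in the integrand, the series terminates after the linear term because $\partial^2 a=0$; carrying out the resulting Fourier integrals (equivalently, invoking the Bopp-shift rule for multiplication by linear Weyl symbols) yields
\begin{equation*}
a\star_\hbar F = a(z)F(z) + \tfrac{i\hbar}{2}\,\eta\cdot J\nabla F(z),\qquad
F\star_\hbar a = a(z)F(z) - \tfrac{i\hbar}{2}\,\eta\cdot J\nabla F(z),
\end{equation*}
where the constant $\eta\cdot z_0$ in $a$ contributes only to the multiplicative piece since it is annihilated by $\nabla$. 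Writing $g=aF$ and $w=\tfrac{i\hbar}{2}\,\eta\cdot J\nabla F$, the parallelogram law $|g+w|^2+|g-w|^2=2|g|^2+2|w|^2$ immediately gives the pointwise identity
\begin{equation*}
\tfrac{1}{2}\bigl(|a\star_\hbar F|^2+|F\star_\hbar a|^2\bigr)=|a(z)|^2\,|F(z)|^2+\tfrac{\hbar^2}{4}\,|\eta\cdot J\nabla F(z)|^2,
\end{equation*}
so the cross terms cancel algebraically, before any integration.

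For the Fourier step, an integration by parts in the definition (\ref{eqReview9}) of the symplectic Fourier transform yields $\mathcal{F}_\sigma(\partial_{z_j}F)(\zeta)=\tfrac{i}{\hbar}(J\zeta)_j\,\mathcal{F}_\sigma F(\zeta)$. Using $J^2=-I_{2n}$ this gives
\begin{equation*}
\mathcal{F}_\sigma\!\bigl(\eta\cdot J\nabla F\bigr)(\zeta)=-\tfrac{i}{\hbar}\,\eta\cdot\zeta\,\mathcal{F}_\sigma F(\zeta),
\end{equation*}
and Plancherel for $\mathcal{F}_\sigma$ produces
\begin{equation*}
\hbar^{2}\!\int_{\mathbb{R}^{2n}}|\eta\cdot J\nabla F(z)|^{2}\,dz=\int_{\mathbb{R}^{2n}}|\eta\cdot z|^{2}\,|\mathcal{F}_\sigma F(z)|^{2}\,dz.
\end{equation*}
Plugging this into the pointwise identity above and integrating over $z$ produces precisely (\ref{eqERSUP5.1}).

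The main technical obstacle is that $F\in M_1^2(\mathbb{R}^{2n})$ is not assumed to lie in $\mathcal{S}(\mathbb{R}^{2n})$, so the differentiations and integrations by parts must be justified. However, by the phase-space analogue of the characterization (\ref{eqReview49}), the hypothesis $F\in M_1^2(\mathbb{R}^{2n})$ is equivalent to $(1+|z|^2)^{1/2}F$ and $(1+|z|^2)^{1/2}\mathcal{F}_\sigma F$ both lying in $L^2(\mathbb{R}^{2n})$. This guarantees finiteness of every integral appearing in (\ref{eqERSUP5.1}), shows (via the Plancherel computation) that $\eta\cdot J\nabla F\in L^2(\mathbb{R}^{2n})$ in the distributional sense, and makes the pointwise identity of step one hold almost everywhere. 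A standard density argument using $\mathcal{S}(\mathbb{R}^{2n})\subset M_1^2(\mathbb{R}^{2n})$ together with continuity of the relevant bilinear forms in the weighted $L^2$-norm then transfers the identity from Schwartz functions to all of $M_1^2(\mathbb{R}^{2n})$.
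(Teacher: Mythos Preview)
Your proposal is correct and follows essentially the same route as the paper: compute $a\star_\hbar F$ and $F\star_\hbar a$ explicitly as $aF\pm\tfrac{i\hbar}{2}\eta\cdot J\nabla F$, apply the parallelogram identity to eliminate the cross terms, and then use Plancherel for $\mathcal{F}_\sigma$ to convert the gradient term into $|\eta\cdot z|^2|\mathcal{F}_\sigma F|^2$. The only cosmetic difference is that the paper derives the formula for $a\star_\hbar F$ via the distributional duality $\langle a\star_\hbar F,\phi\rangle=\langle F,\phi\star_\hbar a\rangle$ (and obtains the second product from $F\star_\hbar a=a\star_{-\hbar}F$), whereas you invoke the Bopp-shift/Taylor argument directly; the regularity justification via $M_1^2(\mathbb{R}^{2n})\simeq L_1^2\cap\mathcal{F}L_1^2$ is the same in both.
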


\begin{proof}
We start by showing that, as a distribution, $a \star_{\hbar}F \in \mathcal{S}^{\prime} (\mathbb{R}^{2n})$ is given by:
\begin{equation}
(a \star_{\hbar}F)(z)= a(z) F(z) + \frac{i \hbar}{2} \eta \cdot J \nabla F (z),
\label{eqERSUP5.2}
\end{equation}
where
\begin{equation}
\nabla F= \left(\frac{\partial F}{\partial x_1}, \cdots, \frac{\partial F}{\partial x_n} , \frac{\partial F}{\partial p_1}, \cdots, \frac{\partial F}{\partial p_n}\right)
\label{eqERSUP5.3}
\end{equation}
is the distributional gradient of $F$.

Indeed, let $\phi \in \mathcal{S}(\mathbb{R}^{2n})$. We have~by the distributional property (\ref{eqReview27}):
\begin{equation}
\begin{array}{c}
<a \star_{\hbar}F, \phi> =  <F,\phi \star_{\hbar} a > =\\
\\
= \int_{\mathbb{R}^{2n}} F(z) (\phi \star_{\hbar} a) (z) dz = \int_{\mathbb{R}^{2n}} F(z) \left(\phi (z) a (z) - \frac{i \hbar}{2} \eta \cdot J \nabla \phi (z) \right)  dz =\\
\\
= <Fa + \frac{i \hbar}{2} \eta \cdot J \nabla F, \phi> .
\end{array}
\label{eqERSUP5.4}
\end{equation}
Hence, (\ref{eqERSUP5.2}) follows.

Since $M_1^2( \mathbb{R}^{2n}) \simeq H^1 (\mathbb{R}^{2n}) \cap \mathcal{F}H^1 (\mathbb{R}^{2n})$ it is clear from (\ref{eqERSUP5.2}), that $a \star_{\hbar}F \in L^2 (\mathbb{R}^{2n})$. Moreover, given that
\begin{equation}
F \star_{\hbar}a= a \star_{- \hbar}F,
\label{eqERSUP5.5}
\end{equation}
the same can be said about $F \star_{\hbar}a$. We conclude that the left-hand side of (\ref{eqERSUP5.1}) is well defined and finite.

From (\ref{eqERSUP5.2},\ref{eqERSUP5.5}), we have
\begin{equation}
\begin{array}{c}
\frac{1}{2} \int_{\mathbb{R}^{2n}} \left(|a \star_{\hbar}F|^2 + |F \star_{\hbar}a|^2 \right) dz=\\
\\
= \frac{1}{2}\int_{\mathbb{R}^{2n}} \left(\left| a(z) F(z) + \frac{i \hbar}{2} \eta \cdot J \nabla F (z) \right|^2 + \left| a(z) F(z) - \frac{i \hbar}{2} \eta \cdot J \nabla F (z) \right|^2\right) dz =\\
\\
= \int_{\mathbb{R}^{2n}} \left(| a(z)|^2 |F(z)|^2 + \frac{\hbar^2}{4} | \eta \cdot J \nabla F(z)|^2 \right) dz.
\end{array}
\label{eqERSUP5.6}
\end{equation}
Since $F \in H^1 (\mathbb{R}^{2n}) $, we can express the last term as
\begin{equation}
\int_{\mathbb{R}^{2n}} | \eta \cdot J \nabla F(z)|^2 dz =  \frac{1}{\hbar^2} \int_{\mathbb{R}^{2n}} | \eta \cdot z|^2 |(\mathcal{F}_{\sigma} F)(z)|^2 dz
\label{eqERSUP5.7}
\end{equation}
and we recover (\ref{eqERSUP5.1}).
\end{proof}

We are now in a position to prove the refined RSUP. This uncertainty principle synthesizes the Heinig-Smith inequality and the RSUP, but is stronger than both.

\begin{theorem}\label{TheoremERSUP2}
Let $W \rho \in \mathcal{W}(\mathbb{R}^{2n})$ be such that
\begin{equation}
W \rho \in \mathcal{A} (\mathbb{R}^{2n}) := \left\{F \in M_{1}^{2} (\mathbb{R}^{2n}): ~ F \mbox{ is real and }  \operatorname*{Cov} (F) \mbox{ is finite } \right\}.
\label{eqERSUP6}
\end{equation}
Then the following matrix inequalities hold in $\mathbb{C}^{2n}$:
\begin{equation}
 \operatorname*{Cov}(W \rho) + \frac{i \hbar}{2}J \ge \mathcal{P} \left[ W \rho \right] \left(\operatorname*{Cov}(|\widetilde{W \rho}|^2) + \frac{1}{4} \operatorname*{Cov} (|\mathcal{F}_{\sigma}(\widetilde{W \rho})|^2) + \frac{i \hbar}{2}J  \right)  \ge 0 .
\label{eqERSUP7}
\end{equation}
The first inequality becomes a matrix identity if and only if the state is pure.

We remark that if a real function $F$ belongs to $M_1^2 (\mathbb{R}^{2n}) \cap M_2^1 (\mathbb{R}^{2n})$, then automatically $F \in \mathcal{A} (\mathbb{R}^{2n}) $.
\end{theorem}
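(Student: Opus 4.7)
Given $\zeta \in \mathbb{C}^{2n}$, the plan is to reduce the matrix statement to the scalar inequalities
\[
\overline{\zeta}^T\bigl[\operatorname*{Cov}(W\rho) + \tfrac{i\hbar}{2}J\bigr]\zeta \;\ge\; \mathcal{P}[W\rho]\,\overline{\zeta}^T M\zeta \;\ge\; 0,
\]
where $M := \operatorname*{Cov}(|\widetilde{W\rho}|^2)+\tfrac14\operatorname*{Cov}(|\mathcal{F}_\sigma\widetilde{W\rho}|^2)+\tfrac{i\hbar}{2}J$. I will introduce the complex linear symbol $a_0(z) := \zeta^T(z-z_0)$ with $z_0 := \langle z\rangle_{W\rho}$ and the associated Weyl operator $\widehat{A}_0$. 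The operator $\widehat{A}_0^*\widehat{A}_0$ is positive and, since $a_0$ is linear, its Weyl symbol collapses to $\overline{a_0}\star_\hbar a_0 = |a_0|^2 + \tfrac{i\hbar}{2}\overline{\zeta}^T J\zeta$, so applying Proposition \ref{PropositionERSUP1} to $\widehat{A}_0^*\widehat{A}_0$ will produce the key chain
\[
(2\pi\hbar)^{-n}\!\!\int (\overline{a_0}\star_\hbar a_0)\,W\rho\,dz \;\ge\; \int(\overline{a_0}\star_\hbar a_0)(W\rho\star_\hbar W\rho)\,dz \;\ge\; 0,
\]
whose LHS, by $\int W\rho\,dz=1$ and the choice of $z_0$, is precisely $(2\pi\hbar)^{-n}\,\overline{\zeta}^T[\operatorname*{Cov}(W\rho) + \tfrac{i\hbar}{2}J]\zeta$.

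The next step is to rewrite the middle term. By trace cyclicity together with the reality of $W\rho$ (giving $\overline{a_0\star_\hbar W\rho} = W\rho\star_\hbar\overline{a_0}$), this middle integral becomes $\int|a_0\star_\hbar W\rho|^2\,dz$. Using the identity (\ref{eqERSUP5.2}) of Proposition \ref{PropositionERSUP2} I will expand $(a_0\star_\hbar W\rho)(z)=a_0(z)W\rho(z) + \tfrac{i\hbar}{2}\zeta^T J\nabla W\rho(z)$ and split the squared modulus into three pieces. The first diagonal piece $\int|a_0W\rho|^2\,dz$ evaluates to $|||W\rho|||^2\bigl[\overline{\zeta}^T\operatorname*{Cov}(|\widetilde{W\rho}|^2)\zeta + |\zeta^T(z_1-z_0)|^2\bigr]$, with $z_1:=\langle z\rangle_{|\widetilde{W\rho}|^2}$, because the two means need not coincide. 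The second diagonal piece $\tfrac{\hbar^2}{4}\int|\zeta^T J\nabla W\rho|^2\,dz$ becomes, by symplectic Parseval, $\tfrac14\int|\zeta^T\xi|^2|\mathcal{F}_\sigma W\rho(\xi)|^2\,d\xi$; here the key parity remark is that $W\rho$ real forces $\mathcal{F}_\sigma W\rho(-\xi)=\overline{\mathcal{F}_\sigma W\rho(\xi)}$, so $|\mathcal{F}_\sigma W\rho|^2$ is even, its mean is zero, and this term collapses to $\tfrac14|||W\rho|||^2\,\overline{\zeta}^T\operatorname*{Cov}(|\mathcal{F}_\sigma\widetilde{W\rho}|^2)\zeta$. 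The cross term, handled by integration by parts via $W\rho\,\partial_j W\rho=\tfrac12\partial_j(W\rho)^2$ and the antisymmetry of $J$, contributes exactly $\tfrac{i\hbar}{2}|||W\rho|||^2\,\overline{\zeta}^T J\zeta$. Adding the three pieces gives $\int|a_0\star_\hbar W\rho|^2\,dz = |||W\rho|||^2\bigl[\overline{\zeta}^T M\zeta + |\zeta^T(z_1-z_0)|^2\bigr]$; multiplying the Proposition \ref{PropositionERSUP1} chain by $(2\pi\hbar)^n$ and using $\mathcal{P}[W\rho]=(2\pi\hbar)^n|||W\rho|||^2$ then yields the first matrix inequality once one discards the nonnegative slack $\mathcal{P}[W\rho]\,|\zeta^T(z_1-z_0)|^2$. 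The second matrix inequality $\overline{\zeta}^T M\zeta\ge 0$ is obtained by rerunning the same computation with the recentred symbol $\widetilde a(z):=\zeta^T(z-z_1)$; the $|\zeta^T(z_1-z_0)|^2$ term is then absent and the positivity $\int|\widetilde a\star_\hbar W\rho|^2\,dz\ge 0$ reduces directly to $|||W\rho|||^2\,\overline{\zeta}^T M\zeta\ge 0$.

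Finally, for the equality assertion: if $\widehat{\rho}$ is pure, Proposition \ref{PropositionERSUP1} is an equality, and moreover the relation $W\rho\star_\hbar W\rho=(2\pi\hbar)^{-n}W\rho$ combined with the general identity $\int z(W\rho)^2\,dz=\int z(W\rho\star_\hbar W\rho)\,dz$ (which follows from integration by parts as in the cross-term computation) forces $z_1=z_0$, so both nonnegative slacks vanish and the first matrix inequality becomes an equality; conversely, matrix equality of the two Hermitian forms for every $\zeta$ forces both $|\zeta^T(z_1-z_0)|^2$ and the Proposition \ref{PropositionERSUP1} slack to vanish, giving $z_1=z_0$ and purity. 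The main obstacle I anticipate is the careful bookkeeping between the two centers $z_0$ and $z_1$, the parity argument for $|\mathcal{F}_\sigma W\rho|^2$, and justifying both the integration by parts and the symplectic Parseval identity under the hypothesis $W\rho \in M_1^2(\mathbb{R}^{2n})\simeq H^1\cap\mathcal{F}H^1$; the concluding remark that $M_1^2\cap M_2^1 \subset \mathcal{A}$ follows since membership in $M_2^1$ controls the second moment of $|F|$, thereby making $\operatorname*{Cov}(F)$ finite.
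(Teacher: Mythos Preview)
Your proposal is correct and follows essentially the same route as the paper: both apply Proposition~\ref{PropositionERSUP1} to the positive operator $(\zeta\cdot\widehat{Y})^{\ast}(\zeta\cdot\widehat{Y})$, identify the left side with $\overline{\zeta}^{T}[\operatorname*{Cov}(W\rho)+\tfrac{i\hbar}{2}J]\zeta$, and rewrite the middle term using the formula $a_0\star_\hbar W\rho=a_0W\rho+\tfrac{i\hbar}{2}\zeta^TJ\nabla W\rho$ together with the symplectic Parseval identity, handling the mismatch of centres $z_0,z_1$ exactly as the paper does via minimisation over the centring point. The only cosmetic difference is that the paper first symmetrises $\widehat{A}_0^{\ast}\widehat{A}_0$ so as to invoke Proposition~\ref{PropositionERSUP2} as a black box (the commutator then supplies the $\tfrac{i\hbar}{2}J$ term), whereas you expand $|a_0\star_\hbar W\rho|^2$ directly and extract that term from the cross piece by integration by parts; your explicit argument for the equality case also fleshes out what the paper leaves to the reader.
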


\begin{proof}
We start by remarking that if $W \rho \in \mathcal{A} (\mathbb{R}^{2n})$, then all the covariance matrices appearing in (\ref{eqERSUP7}) are finite.

Define the operators
\begin{equation}
\widehat{Y}_j= \widehat{Z}_j - <\widehat{Z}_j> \widehat{I},
\label{eqERSUP8}
\end{equation}
for $j=1, \cdots, 2n$, and where
\begin{equation}
<\widehat{Z}_j>=Tr(\widehat{Z}_j \widehat{\rho}).
\label{eqERSUP9}
\end{equation}
Let also $\eta =(\eta_1, \cdots, \eta_{2n}) \in \mathbb{C}^{2n}$ and define
\begin{equation}
\widehat{A}:= (\eta \cdot \widehat{Y})^{\ast} (\eta \cdot \widehat{Y})= \sum_{j,k=1}^{2n} \overline{\eta_j} \eta_k \widehat{Y}_j\widehat{Y}_k,
\label{eqERSUP10}
\end{equation}
where $\widehat{B}^{\ast}$ denotes the adjoint of the operator $\widehat{B}$. Clearly, $\widehat{A}$ is a positive Weyl operator with symbol:
\begin{equation}
a(z)=\sum_{j,k=1}^{2n} \overline{\eta_j} \eta_k y_j\star_{\hbar} y_k =\sum_{j,k=1}^{2n} \overline{\eta_j} \eta_k(y_j y_k + \frac{i \hbar}{2} J_{jk} )= | \eta \cdot y|^2 + \frac{i \hbar}{2} \sigma (\eta, \overline{\eta}),
\label{eqERSUP11}
\end{equation}
where $y_j=z_j - <\widehat{Z}_j>$ is the symbol of $\widehat{Y}_j$.

Since $W \rho \in \mathcal{A} (\mathbb{R}^{2n})$, we have that $\widehat{A}\widehat{\rho}$ is trace class, or equivalently, that
\begin{equation}
\int_{\mathbb{R}^{2n}} a(z) W \rho (z) dz
\label{eqERSUP12}
\end{equation}
exists and is finite. We conclude that (\ref{eqERSUP1}) holds.

Next we evaluate the integrals in (\ref{eqERSUP1}). We start with
\begin{equation}
\begin{array}{c}
\int_{\mathbb{R}^{2n}} a(z) W \rho (z) dz = \sum_{j,k=1}^{2n} \overline{\eta_j} \eta_k \int_{\mathbb{R}^{2n}}y_jy_k W \rho (z) dz +   \frac{i \hbar}{2} \sigma (\eta, \overline{\eta}) = \\
\\
= \overline{\eta} \cdot  \operatorname*{Cov} (W \rho) \eta + \frac{i \hbar}{2} \sigma (\eta, \overline{\eta})= \overline{\eta} \cdot \left(\operatorname*{Cov} (W \rho)+ \frac{i \hbar}{2} J \right) \eta .
\end{array}
\label{eqERSUP12}
\end{equation}
Next, we have
\begin{equation}
\begin{array}{c}
(2 \pi \hbar)^n \int_{\mathbb{R}^{2n}} a(z) (W \rho (z) \star_{\hbar} W \rho (z)) dz = Tr(\widehat{A} \widehat{\rho}^2)=\\
\\
=Tr((\eta \cdot \widehat{Y})^{\ast} (\eta \cdot \widehat{Y}) \widehat{\rho}^2) =\\
\\
=\frac{1}{2}Tr\left( \left((\eta \cdot \widehat{Y})^{\ast} (\eta \cdot \widehat{Y})+(\eta \cdot \widehat{Y})(\eta \cdot \widehat{Y})^{\ast} \right) \widehat{\rho}^2\right)+\frac{1}{2}Tr\left(\left[(\eta \cdot \widehat{Y})^{\ast}, (\eta \cdot \widehat{Y})\right] \widehat{\rho}^2\right) =\\
\\
=\frac{1}{2}Tr\left[ \left((\eta \cdot \widehat{Y}) \widehat{\rho} \right) \left(\widehat{\rho}(\eta \cdot \widehat{Y})^{\ast} \right) \right]+\frac{1}{2}Tr\left[ \left((\eta \cdot \widehat{Y})^{\ast} \widehat{\rho} \right) \left(\widehat{\rho}(\eta \cdot \widehat{Y}) \right) \right] + \\
\\
+\frac{i \hbar}{2} \sigma( \eta , \overline{\eta}) Tr (\widehat{\rho}^2)=\\
\\
=\frac{(2 \pi \hbar)^n}{2} \int_{\mathbb{R}^{2n}} \left[ \left( (\eta \cdot y) \star_{\hbar} W \rho \right) \left(W \rho \star_{\hbar} (\overline{\eta \cdot y})  \right) + \right.\\
\\
\left. + \left( (\overline{\eta \cdot y}) \star_{\hbar} W \rho \right) \left(W \rho \star_{\hbar} (\eta \cdot y)  \right) \right]dz
+ \frac{i \hbar}{2}(2 \pi \hbar)^n \sigma( \eta , \overline{\eta}) ||| W \rho|||^2=\\
\\
=\frac{(2 \pi \hbar)^n}{2}  \int_{\mathbb{R}^{2n}}  \left( |(\eta \cdot y) \star_{\hbar} W \rho|^2 +| W \rho \star_{\hbar} (\eta \cdot y)|^2  \right) dz +\\
\\
+ \frac{i \hbar}{2}(2 \pi \hbar)^n \sigma( \eta , \overline{\eta}) ||| W \rho|||^2,
\end{array}
\label{eqERSUP13}
\end{equation}
where we used the cyclicity of the trace and (\ref{eqReview27}).

From Proposition \ref{PropositionERSUP2}, it follows that
\begin{equation}
\begin{array}{c}
(2 \pi \hbar)^n \int_{\mathbb{R}^{2n}} a(z) (W \rho (z) \star_{\hbar} W \rho (z)) dz = \\
\\
= (2 \pi \hbar)^n \int_{\mathbb{R}^{2n}} \left(|a(z)|^2 |W \rho (z)|^2 +\frac{|\eta \cdot z|^2}{4} |(\mathcal{F}_{\sigma} W \rho)(z) |^2\right) dz + \frac{i \hbar}{2} \sigma (\eta , \overline{\eta}) \mathcal{P}\left[W \rho \right].
\end{array}
\label{eqERSUP14}
\end{equation}
Now let us consider the two terms in the integral in previous expression. We have (recall that $<\widehat{Z}>$ is the expectation value for $W \rho$ and not $|W \rho|^2$):
\begin{equation}
\begin{array}{c}
 \int_{\mathbb{R}^{2n}}  |a(z)|^2 |W \rho (z)|^2 dz = \overline{\eta} \cdot \left(\int_{\mathbb{R}^{2n}}  (z-<\widehat{Z}>) (z-<\widehat{Z}>)^T |W \rho (z)|^2 dz \right) \eta \ge \\
 \\
 \ge min_{\zeta \in \mathbb{R}^{2n}} \left\{\overline{\eta} \cdot \left(\int_{\mathbb{R}^{2n}}  (z-\zeta) (z- \zeta)^T |W \rho (z)|^2 dz \right) \eta \right\} =\\
 \\
 = ||| W \rho|||^2 \overline{\eta} \cdot  \operatorname*{Cov} (|\widetilde{W \rho}|^2) \eta .
\end{array}
\label{eqERSUP15}
\end{equation}
Next, we remark that
\begin{equation}
\int_{\mathbb{R}^{2n}} |\eta \cdot z|^2 |(\mathcal{F}_{\sigma} W \rho)(z) |^2 dz = |||W \rho|||^2 \overline{\eta} \cdot  \operatorname*{Cov}\left( |\mathcal{F}_{\sigma} \widetilde{W \rho} |^2\right) \eta,
\label{eqERSUP16}
\end{equation}
where we used the fact that
\begin{equation}
\int_{\mathbb{R}^{2n}} z_j|\mathcal{F}_{\sigma} W \rho (z)|^2  dz=0,
\label{eqERSUP17}
\end{equation}
for $j=1, \cdots, 2n$, and that, by Placherel's Theorem, $|||\mathcal{F}_{\sigma} W \rho|||=|||W \rho|||$. Altogether, from (\ref{eqERSUP14})-(\ref{eqERSUP16}), we obtain
\begin{equation}
\begin{array}{c}
(2 \pi \hbar)^n \int_{\mathbb{R}^{2n}} a(z) \left( W \rho \star_{\hbar} W \rho \right) (z) dz \ge \\
\\
\mathcal{P} \left[W \rho \right] \overline{\eta} \cdot \left(  \operatorname*{Cov} (|\widetilde{W \rho}|^2) + \frac{1}{4} \operatorname*{Cov} \left( |\mathcal{F}_{\sigma} \widetilde{W \rho} |^2\right) + \frac{i \hbar}{2} J \right) \eta.
 \end{array}
\label{eqERSUP18}
\end{equation}
The first inequality in (\ref{eqERSUP7}) then follows from (\ref{eqERSUP1},\ref{eqERSUP12},\ref{eqERSUP18}).

To show the second inequality in (\ref{eqERSUP7}), we observe that, from our previous calculations (\ref{eqERSUP14}, \ref{eqERSUP16}):
\begin{equation}
\begin{array}{c}
|||W \rho|||^2 \overline{\eta} \cdot \left[  \operatorname*{Cov}(|\widetilde{W \rho}|^2) + \frac{1}{4} \operatorname*{Cov} (|\mathcal{F}_{\sigma} \widetilde{W \rho}|^2) + \frac{i \hbar}{2} J\right] \eta =\\
\\
= \mbox{min}_{\zeta \in\mathbb{R}^{2n}}  \overline{\eta} \cdot \left(  \int_{\mathbb{R}^{2n}} (z - \zeta)(z - \zeta)^T | W \rho(z)|^2 dz \right) \eta + \\
\\
+  |||W \rho|||^2 \overline{\eta} \cdot \left[ \frac{1}{4}  \operatorname*{Cov}(| \mathcal{F}_{\sigma} \widetilde{W \rho}|^2) + \frac{i \hbar}{2} J \right] \eta =\\
\\
= \mbox{min}_{\zeta \in\mathbb{R}^{2n}} \overline{\eta} \cdot \left( \int_{\mathbb{R}^{2n}} (z - \zeta) \star_{\hbar} (z- \zeta)^T (W \rho (z) \star_{\hbar} W \rho (z)) dz \right) \eta =\\
\\
= \mbox{min}_{\zeta \in\mathbb{R}^{2n}} \int_{\mathbb{R}^{2n}} b_{\zeta} (z) (W \rho (z) \star_{\hbar} W \rho (z)) dz =\\
\\
= \mbox{min}_{\zeta \in\mathbb{R}^{2n}}  \frac{1}{(2 \pi \hbar)^n} Tr (\widehat{B}_{\zeta} \widehat{\rho}^2),
\end{array}
\label{eqERSUP18.1}
\end{equation}
where $\widehat{B}_{\zeta}$ is the Weyl operator
\begin{equation}
\widehat{B}_{\zeta} = \left((\eta \cdot(\widehat{Z}- \zeta)\right)^{\ast} \left((\eta \cdot(\widehat{Z}- \zeta)\right),
\label{eqERSUP18.2}
\end{equation}
with symbol
\begin{equation}
 b_{\zeta} (z)  =  \overline{\eta} \cdot  (z - \zeta) \star_{\hbar} (z- \zeta)^T \eta = | \eta \cdot (z- \zeta)|^2 + \sigma (\eta, \overline{\eta}).
\label{eqERSUP18.3}
\end{equation}
This is manifestly a positive operator, and so from (\ref{eqERSUP18.1}), it follows that
\begin{equation}
|||W \rho|||^2 \overline{\eta} \cdot \left[  \operatorname*{Cov} (|\widetilde{W \rho}|^2) + \frac{1}{4} \operatorname*{Cov} (|\mathcal{F}_{\sigma} \widetilde{W \rho}|^2) + \frac{i \hbar}{2} J \right] \eta \ge 0.
\label{eqERSUP18.4}
\end{equation}

We leave to the reader the simple proof that the first inequality in (\ref{eqERSUP7}) becomes an equality if and only if the state is pure.
\end{proof}

\vspace{0.3 cm}
\noindent
The following is a simple corollary of the previous theorem.

\begin{corollary}\label{Corollary1}
Let $W \rho \in \mathcal{W}(\mathbb{R}^{2n}) \cap \mathcal{A} (\mathbb{R}^{2n})$. Then the following inequalities hold:
\begin{equation}
\begin{array}{l}
 \operatorname*{Cov}(W \rho) \ge \mathcal{P} \left[W \rho \right] \left(\operatorname*{Cov} (|\widetilde{W \rho}|^2) + \frac{1}{4} \operatorname*{Cov} (| \mathcal{F}_{\sigma} \widetilde{W \rho}|^2) \right), \\
\\
\operatorname*{Cov}(W \rho) \ge \mathcal{P} \left[W \rho \right] \operatorname*{Cov} (|\widetilde{W \rho}|^2), \\
\\
\operatorname*{Cov}(W \rho) \ge  \frac{\mathcal{P} \left[W \rho \right] }{4} \operatorname*{Cov} (| \mathcal{F}_{\sigma} \widetilde{W \rho}|^2) .
\end{array}
\label{eqERSUP18.4.B}
\end{equation}
\end{corollary}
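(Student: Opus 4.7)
My plan is to deduce all three inequalities directly from Theorem \ref{TheoremERSUP2} together with the fact that the covariance matrix of any (non-negative) probability density is positive semi-definite.

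The first inequality of (\ref{eqERSUP7}) can be rewritten as the statement that the Hermitian matrix
\[
M := \operatorname*{Cov}(W \rho) - \mathcal{P}[W\rho]\!\left(\operatorname*{Cov}(|\widetilde{W \rho}|^2)+\tfrac{1}{4}\operatorname*{Cov}(|\mathcal{F}_\sigma \widetilde{W\rho}|^2)\right) + \tfrac{i\hbar}{2}(1-\mathcal{P}[W\rho])J
\]
is positive semi-definite in $\mathbb{C}^{2n}$. Denote its real symmetric part by
\[
A := \operatorname*{Cov}(W \rho) - \mathcal{P}[W\rho]\!\left(\operatorname*{Cov}(|\widetilde{W \rho}|^2)+\tfrac{1}{4}\operatorname*{Cov}(|\mathcal{F}_\sigma \widetilde{W\rho}|^2)\right),
\]
which is the exact matrix appearing in the first inequality of the corollary. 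Choosing $\eta \in \mathbb{R}^{2n}$ in the quadratic form $\overline{\eta}\cdot M\eta \ge 0$, the antisymmetric part drops out since $\eta\cdot J\eta=0$, and we obtain $\eta^T A \eta \ge 0$ for all $\eta\in\mathbb{R}^{2n}$. Since $A$ is real and symmetric, this real quadratic-form positivity is equivalent to $A\ge 0$ as a Hermitian matrix (if $\eta=a+ib$ then $\overline{\eta}\cdot A\eta = a^T A a + b^T A b$ because the cross terms cancel by symmetry). This proves the first inequality of the corollary.

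For the last two inequalities, observe that $|\widetilde{W\rho}|^2$ and $|\mathcal{F}_\sigma\widetilde{W\rho}|^2$ are, by construction in (\ref{eqIntroduction9.1}) and Plancherel, genuine (non-negative, unit $L^1$-norm) probability densities on $\mathbb{R}^{2n}$. Consequently both $\operatorname*{Cov}(|\widetilde{W\rho}|^2)$ and $\operatorname*{Cov}(|\mathcal{F}_\sigma \widetilde{W\rho}|^2)$ are real symmetric and positive semi-definite. Since $\mathcal{P}[W\rho]\ge 0$, dropping either of the two non-negative terms from the right-hand side of the first inequality of the corollary only makes it weaker, yielding the second and third inequalities. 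No further computation is needed, and I do not anticipate a genuine obstacle; the only mildly subtle point is the passage from the complex matrix inequality with the symplectic imaginary part to a real matrix inequality, which is handled by restricting the test vectors to $\mathbb{R}^{2n}$ as above.
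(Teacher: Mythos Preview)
Your proof is correct and follows essentially the same approach as the paper: restricting the complex matrix inequality (\ref{eqERSUP7}) to real test vectors to obtain the first line, and then dropping positive semi-definite summands to get the remaining two. Your version simply spells out in more detail what the paper compresses into two sentences.
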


\begin{proof}
The first inequality is obtained from (\ref{eqERSUP7}) by a restriction to $\mathbb{R}^{2n}$. The remaining two inequalities follow from the observation that $A +B \ge A$ if $A$ and $B$ are real symmetric and positive matrices.
\end{proof}

Before we proceed, we make the following remarks.

\begin{remark}\label{RemarkSymplecticCapacities}
The RSUP has an interesting geometric interpretation; as shown in
\cite{physreps} the condition%
\[
\Sigma+\frac{i\hbar}{2}J\geq0
\]
is equivalent to the condition $c(\Omega)\geq\pi\hbar$ where $\Omega$ is the
covariance ellipsoid and $c$ any symplectic capacity on the standard
symplectic space $(\mathbb{R}^{2n},\sigma)$. This property relates the RSUP to
deep results in symplectic topology (Gromov's non-squeezing theorem
\cite{Gromov}). It would certainly interesting to extend this geometric
interpretation to the refinement of the RSUP  and the inequalities (\ref{eqERSUP18.4.B}) proposed in the present paper.
\end{remark}

\begin{remark}\label{Remarkhbar}
Let $A \in M (n; \mathbb{C})$ be some complex matrix. Then $A$ is positive if and only if $A^T$ is positive. From this observation and the fact that $J^T =-J$ it follows that a function $F$ satisfies the refined RSUP (\ref{eqERSUP7}) if and only if it satisfies the same inequalities with $\hbar$ replaced by $- \hbar$.
\end{remark}

Next we show that the refined RSUP is invariant under linear symplectic and anti-symplectic transformations.

\begin{theorem}\label{TheoremSymplecticCovariance}
Suppose that $F \in \mathcal{A} (\mathbb{R}^{2n}) $ satisfies the refined RSUP:
\begin{equation}
  \operatorname*{Cov} (F) + \frac{i \hbar}{2} J \ge \mathcal{P} \left[F \right] \left( \operatorname*{Cov} (|\widetilde{F}|^2) + \frac{1}{4} \operatorname*{Cov} \left( |\mathcal{F}_{\sigma} \widetilde{F}|^2\right) + \frac{i \hbar}{2} J \right)  \ge 0 .
\label{eqERSUP19}
\end{equation}
Then for every $S \in ASp(n)$, the function $F \circ S$ also satisfies (\ref{eqERSUP19}).
\end{theorem}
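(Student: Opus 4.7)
The plan is to reduce the statement to two elementary cases—matrices in $Sp(n)$ and the time reversal $T$—since by (\ref{eqReview4}) every $A\in ASp(n)$ either lies in $Sp(n)$ or factorizes as $A=TS'$ with $S'\in Sp(n)$, in which case $F\circ A=(F\circ T)\circ S'$. I would first note that $|\det S|=1$ for every $S\in ASp(n)$, so the change of variable $w=Sz$ preserves $\int F\,dz$ and $|||F|||^2$; hence $\widetilde{F\circ S}=\widetilde{F}\circ S$, $\mathcal{P}[F\circ S]=\mathcal{P}[F]$, and $F\circ S$ inherits from $F$ both the modulation-space condition $F\circ S\in M_1^2(\mathbb{R}^{2n})$ and the finiteness of the three covariance matrices appearing in (\ref{eqERSUP19}), so that the statement of (\ref{eqERSUP19}) makes sense for $F\circ S$.

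For the symplectic case $S\in Sp(n)$, my plan is to collect the transformation rules by direct computation: $\operatorname*{Cov}(F\circ S)=S^{-1}\operatorname*{Cov}(F)S^{-T}$, and likewise with $F$ replaced by $|\widetilde{F}|^{2}$. For the Fourier piece I would use the symplectic invariance $\sigma(S\zeta,w)=\sigma(\zeta,S^{-1}w)$ inside (\ref{eqReview9}) to get $\mathcal{F}_{\sigma}(F\circ S)=(\mathcal{F}_{\sigma}F)\circ S$, whence the analogous conjugation rule for $\operatorname*{Cov}(|\mathcal{F}_{\sigma}\widetilde{F\circ S}|^{2})$. Finally the relation $S^{T}JS=J$ rearranges to $J=S^{-1}JS^{-T}$. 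Putting these together, each side of (\ref{eqERSUP19}) for $F$ is conjugated by $S^{-1}$ on the left and $S^{-T}$ on the right to produce the corresponding side for $F\circ S$; since positivity of Hermitian matrices is preserved under conjugation by an invertible matrix, this closes the symplectic case.

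For the case $S=T$ I expect the main obstacle, and my plan is to invoke Remark \ref{Remarkhbar}. Here the relevant facts are $T^{-1}=T$, $TJT=-J$, and the identity $\mathcal{F}_{\sigma}(F\circ T)(\zeta)=(\mathcal{F}_{\sigma}F)(-T\zeta)$, which I would derive by substituting $w=Tz$ in (\ref{eqReview9}) and using $\sigma(\zeta,Tw)=-\sigma(T\zeta,w)$ (itself a consequence of $TJT=-J$). These produce $\operatorname*{Cov}(F\circ T)=T\operatorname*{Cov}(F)T$, and the same rule for $|\widetilde{F}|^{2}$ and $|\mathcal{F}_{\sigma}\widetilde{F}|^{2}$. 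The sign flip $TJT=-J$ then rewrites each side of (\ref{eqERSUP19}) for $F\circ T$ as $T(\,\cdot\,)T$ applied to the $\hbar\to-\hbar$ version of the corresponding expression for $F$; by Remark \ref{Remarkhbar} that $-\hbar$ version holds for $F$ if and only if the original does, so conjugating by the invertible matrix $T$ and using that positivity is preserved delivers (\ref{eqERSUP19}) for $F\circ T$, and combined with the symplectic case, for every $A\in ASp(n)$.

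The hard part is keeping track of the two coincident sign flips in the $T$ case—one from $TJT=-J$ and one from $\sigma(\zeta,Tw)=-\sigma(T\zeta,w)$ inside the Fourier transform—so that Remark \ref{Remarkhbar} can be applied cleanly. Once those signs are pinned down, everything reduces to the algebraic fact that Hermitian positivity is preserved under congruence.
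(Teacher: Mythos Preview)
Your proposal is correct and follows essentially the same approach as the paper: both arguments rest on the covariance conjugation rule $\operatorname*{Cov}(G\circ S)=S^{-1}\operatorname*{Cov}(G)(S^{-1})^{T}$, the behaviour of $\mathcal{F}_{\sigma}$ under composition with $S$, and Remark \ref{Remarkhbar} to absorb the sign flip in the anti-symplectic case. The only organizational difference is that you split into the two cases $S\in Sp(n)$ and $S=T$ via the decomposition (\ref{eqReview4}), whereas the paper treats all $S\in ASp(n)$ at once by writing $\mathcal{F}_{\sigma}(F\circ S)(\zeta)=(\mathcal{F}_{\sigma}F)(\epsilon S\zeta)$ with $\epsilon=\pm1$ and $SJS^{T}=\epsilon J$; this unified bookkeeping is slightly more economical but carries exactly the same content as your two-case argument.
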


\begin{proof}
A simple calculation shows that
\begin{equation}
\left(\mathcal{F}_{\sigma} (F \circ S)\right)( \zeta)= (\mathcal{F}_{\sigma} F) (\epsilon S\zeta),
\label{eqERSUP20}
\end{equation}
where $\epsilon=1$ if $S$ is symplectic and $\epsilon=-1$ if $S$ is anti-symplectic. It is then a straightforward task to check that
\begin{equation}
 \operatorname*{Cov}(G \circ S)= S^{-1} \operatorname*{Cov}(G) (S^{-1})^T,
\label{eqERSUP21}
\end{equation}
for $G=F,|\widetilde{F}|^2$ and $| \mathcal{F}_{\sigma} \widetilde{F}|^2$. Using the fact that $SJS^T= \epsilon J$, we conclude that $F \circ S$ satisfies (\ref{eqERSUP19}) with $\hbar$ replaced by $\epsilon \hbar$. In view of Remark \ref{Remarkhbar} the result follows.
\end{proof}

\begin{theorem}\label{TheoremSaturation}
Let $F\in \mathcal{A} (\mathbb{R}^{2n})$ be such that (\ref{eqERSUP19}) holds. Then $F$ has minimal Robertson-Schr\"odinger uncertainty,
\begin{equation}
\lambda_{\sigma,1} ( \operatorname*{Cov}(F))= \cdots=\lambda_{\sigma,n} (\operatorname*{Cov}(F))=\frac{\hbar}{2},
\label{eqsaturation1}
\end{equation}
if and only if $F$ is proportional to a Gaussian pure state Wigner function:
\begin{equation}
F(z) = \frac{1}{(\pi \hbar)^n} \exp \left( -\frac{1}{2} (z-z_0) \cdot ( \operatorname*{Cov}(F))^{-1} (z-z_0) \right)
\label{eqsaturation2}
\end{equation}
with $z_0 \in \mathbb{R}^{2n}$ and $\frac{2}{\hbar}  \operatorname*{Cov}(F) \in Sp (n)$.
\end{theorem}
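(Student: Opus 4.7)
The ``if'' direction is direct from results already in the paper: any $F$ of the form (\ref{eqsaturation2}) with $\tfrac{2}{\hbar}\operatorname*{Cov}(F)\in Sp(n)$ is, by Littlejohn's theorem (Remark \ref{RemarkLittlejohn}), the Wigner function of a Gaussian pure state, hence saturates (\ref{eqsaturation1}) by Theorem \ref{TheoremReview7} and satisfies (\ref{eqERSUP19}) by Theorem \ref{TheoremERSUP2}. The bulk of the plan concerns the ``only if'' direction.

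My first move is a symplectic reduction. By Williamson's theorem (\ref{eqReview8}) and the saturation hypothesis (\ref{eqsaturation1}), there exists $S\in Sp(n)$ with $S\operatorname*{Cov}(F)S^T = \tfrac{\hbar}{2}I$. By the symplectic invariance of (\ref{eqERSUP19}) established in Theorem \ref{TheoremSymplecticCovariance}, I may replace $F$ by $F\circ S^{-1}$ and assume $\operatorname*{Cov}(F) = \tfrac{\hbar}{2}I$ from the outset. Then the left-hand side of (\ref{eqERSUP19}), $L := \operatorname*{Cov}(F) + \tfrac{i\hbar}{2}J = \tfrac{\hbar}{2}(I + iJ)$, has eigenvalues $0$ and $\hbar$, each with multiplicity $n$, so $\dim\ker L = n$ is maximal.

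The heart of the proof is a kernel-dimension comparison. Setting $R := \mathcal{P}[F]\bigl(M + \tfrac{i\hbar}{2}J\bigr)$ and $M := \operatorname*{Cov}(|\widetilde{F}|^2) + \tfrac{1}{4}\operatorname*{Cov}(|\mathcal{F}_{\sigma}\widetilde{F}|^2)$, the refined RSUP reads $L \ge R \ge 0$; for any $v\in \ker L$, the chain $0 \le v^{*}Rv \le v^{*}Lv = 0$ yields $v\in \ker R$, so $\ker L\subseteq \ker R$. Since $\dim\ker R$ equals the multiplicity of $\hbar/2$ among the Williamson invariants of $M$, hence is at most $n$, I conclude $\dim\ker R = n$ and every Williamson invariant of $M$ equals $\hbar/2$, giving $\det M = (\hbar/2)^{2n}$. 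I then chain Minkowski's determinant inequality applied to $M = A+B$ with $A := \operatorname*{Cov}(|\widetilde{F}|^2)$ and $B := \tfrac{1}{4}\operatorname*{Cov}(|\mathcal{F}_{\sigma}\widetilde{F}|^2)$, AM-GM, and the Heinig-Smith inequality (\ref{eqReview32.2}):
\[
(\det M)^{1/(2n)} \ge (\det A)^{1/(2n)} + (\det B)^{1/(2n)} \ge 2(\det A\det B)^{1/(4n)} \ge \tfrac{\hbar}{2}.
\]
Because the left side equals $\hbar/2$, all three inequalities must be equalities, and the saturation of Heinig-Smith forces $F$ to be a generalized Gaussian of the form (\ref{eqReview32.3}).

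Finally, since $F$ is real with positive definite covariance, the Gaussian parameters in (\ref{eqReview32.3}) are real and the quadratic form positive definite; matching $\operatorname*{Cov}(F) = \tfrac{\hbar}{2}I$ then pins down $F$ (up to an overall multiplicative constant, which is what ``proportional'' in the statement permits) as $\tfrac{1}{(\pi\hbar)^n}e^{-|z-z_0|^2/\hbar}$. Undoing the symplectic reduction with $S$ and using that $Sp(n)$ is closed under transposition and inversion yields the claimed formula (\ref{eqsaturation2}) with $\tfrac{2}{\hbar}\operatorname*{Cov}(F)\in Sp(n)$. I expect the main obstacle to be the kernel-dimension comparison above: it crucially uses both that RSUP saturation endows $L$ with the maximal possible $n$-dimensional null space, and the universal bound $\dim\ker(M + \tfrac{i\hbar}{2}J)\le n$ for real symmetric positive definite $M$. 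Once that step is secured, the Minkowski, AM-GM and Heinig-Smith inequalities must collapse simultaneously and the Gaussian conclusion follows.
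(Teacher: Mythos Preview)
Your proposal is correct and follows essentially the same route as the paper's own proof: both arguments show that the matrix $M=\operatorname*{Cov}(|\widetilde F|^2)+\tfrac14\operatorname*{Cov}(|\mathcal F_\sigma\widetilde F|^2)$ must have all Williamson invariants equal to $\hbar/2$ (via the kernel/eigenvector inclusion forced by $L\ge R\ge 0$), and then collapse the chain Minkowski $+$ AM--GM $+$ Heinig--Smith to force $F$ Gaussian. The only cosmetic difference is that you first perform a Williamson reduction to $\operatorname*{Cov}(F)=\tfrac{\hbar}{2}I$ and phrase things as a dimension count on $\ker L$ versus $\ker R$, whereas the paper works directly with the eigenvectors $u_j$ of $\operatorname*{Cov}(F)J^{-1}$; the content is the same.
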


\begin{proof}
Since (\ref{eqERSUP19}) holds, we have in particular
\begin{equation}
  \operatorname*{Cov}(F)  + \frac{i \hbar}{2} J \ge 0.
\label{eqsaturation3}
\end{equation}
Let $(u_j)_j$ be the $n$ eigenvectors of $ \operatorname*{Cov}(F) J^{-1}$ associated with the eigenvalues $-i \lambda_{\sigma,j} (\operatorname*{Cov}(F)) = -\frac{i \hbar}{2}$:
\begin{equation}
\operatorname*{Cov}(F) J^{-1} u_j = - \frac{i\hbar}{2} u_j, \hspace{1 cm} j=1, \cdots, n.
\label{eqsaturation4}
\end{equation}
Then we have:
\begin{equation}
\overline{u_j} \cdot J\left(  \operatorname*{Cov}(F)  + \frac{i \hbar}{2} J \right)J^{-1} u_j=0,
 \label{eqsaturation5}
\end{equation}
for $j=1, \cdots, n$.

From (\ref{eqERSUP19}), we must also have:
\begin{equation}
\overline{u_j} \cdot J \left(  \operatorname*{Cov}(|\widetilde{F}|^2)  +\frac{1}{4} \operatorname*{Cov} (|\mathcal{F}_{\sigma} \widetilde{F}|^2) +  \frac{i \hbar}{2} J \right) J^{-1} u_j=0,
 \label{eqsaturation5}
\end{equation}
for $j=1, \cdots, n$.

By (\ref{eqERSUP19}), the matrix
\begin{equation}
A= \operatorname*{Cov}(|\widetilde{F}|^2)  +\frac{1}{4} \operatorname*{Cov} (|\mathcal{F}_{\sigma} \widetilde{F}|^2)
 \label{eqsaturation6}
\end{equation}
satisfies the RSUP. And so, from (\ref{eqsaturation5}), we conclude that its symplectic eigenvalues
are also all equal to $\frac{\hbar}{2}$ and that $(u_j)_j$ are the associated eigenvectors:
\begin{equation}
A J^{-1} u_j = - \frac{i\hbar}{2} u_j, \hspace{1 cm} j=1, \cdots, n.
\label{eqsaturation7}
\end{equation}
It follows that
\begin{equation}
\det (A)= \Pi_{j=1}^n \left(\lambda_{\sigma,j}(A) \right)^2 = \left( \frac{\hbar}{2}\right)^{2n}.
\label{eqsaturation7.1}
\end{equation}
Setting $X=\det ( \operatorname*{Cov}(|\widetilde{F}|^2))$, $Y=\det (\frac{1}{4} \operatorname*{Cov} (|\mathcal{F}_{\sigma} \widetilde{F}|^2))$, we have from (\ref{eqsaturation7.1}), Minkowski's Determinant Theorem, the Heinig-Smith inequality (\ref{eqReview32.2}) and the arithmetic-geometric mean inequality that
\begin{equation}
\begin{array}{c}
\left(\frac{\hbar}{2} \right)^{2n} = \det \left( \operatorname*{Cov}(|\widetilde{F}|^2)  +\frac{1}{4} \operatorname*{Cov} (|\mathcal{F}_{\sigma} \widetilde{F}|^2) \right) \ge \\
\\
\ge \left( X^{\frac{1}{2n}} + Y^{\frac{1}{2n}} \right)^{2n} \ge  \left(2 \sqrt{ X^{\frac{1}{2n}}  Y^{\frac{1}{2n}}} \right)^{2n}=\\
\\
=\sqrt{\left(\det ( \operatorname*{Cov}(|\widetilde{F}|^2)) \right) \left(\det ( \operatorname*{Cov} (|\mathcal{F}_{\sigma} \widetilde{F}|^2)) \right)} \ge \left(\frac{\hbar}{2} \right)^{2n}.
\end{array}
\label{eqsaturation8}
\end{equation}
Thus all the inequalities become equalities. In particular the Heinig-Smith inequality is saturated, and $F$ must be of the form (\ref{eqReview32.3}).

We have
\begin{equation}
 \operatorname*{Cov} (F)=\frac{1}{2 \pi} A^{-1},   \hspace{0.5 cm} <z_j>_F=  (A^{-1}b)_j,
\label{eqsaturation9}
\end{equation}
for $j=1, \cdots, 2n$. Since, by assumption, $F$ is a real function, we conclude that $b \in \mathbb{R}^{2n}$, $c \in \mathbb{R}$ and $A$ is real, symmetric and positive-definite. Altogether, we recover (\ref{eqsaturation2}). Finally, since $F$ is a Gaussian distribution which saturates the RSUP, then by Littlejohn's Theorem we must have $\frac{2}{\hbar}  \operatorname*{Cov}(F) \in Sp (n)$.
\end{proof}

To complete our analysis we consider two examples. The first one shows that a function may satisfy the RSUP but not the refined RSUP. In a certain sense Example \ref{ExampleReview9} already does that. But that is not really a good example since $ \operatorname*{Cov}(|\mathcal{F}_{\sigma} \widetilde{F}|^2)$ is not finite.

The second example shows that the refined RSUP is not a sufficient condition for a phase space function to be a Wigner distribution.

\begin{example}\label{ExampleFinal1}
Consider the following real and normalized function defined on $\mathbb{R}^2$:
\begin{equation}
F(z)= \frac{48}{\pi \hbar} \left( \frac{|z|^2}{\hbar} - \frac{1}{6} \right) e^{- \frac{4 |z|^2}{\hbar}}.
\label{eqExampleFinal1}
\end{equation}
By straightforward calculations, we have:
\begin{equation}
 \operatorname*{Cov} (F)= \frac{\hbar}{2}I, \hspace{1 cm} \operatorname*{Cov} (|\widetilde{F}|^2) = \frac{11 \hbar}{80}I,
\label{eqExampleFinal2}
\end{equation}
while
\begin{equation}
\mathcal{P}\left[F \right] = 10.
\label{eqExampleFinal3}
\end{equation}
We conclude that
\begin{equation}
\operatorname*{Cov} (F) + \frac{i \hbar}{2}J \ge 0,
\label{eqExampleFinal4}
\end{equation}
that is $F$ satisfies the RSUP. On the other hand:
\begin{equation}
\mathcal{P}\left[F \right] \operatorname*{Cov} (|\widetilde{F}|^2) > \operatorname*{Cov}(F),
\label{eqExampleFinal5}
\end{equation}
which violates the second inequality in (\ref{eqERSUP18.4.B}).

To obtain a similar example in higher dimensions, we just have to take tensor products of the function (\ref{eqExampleFinal1}).
\end{example}

\begin{example}\label{ExampleFinal2}
Next consider the function
\begin{equation}
F(z) = \frac{1}{2 \pi \hbar} \left( \frac{|z|}{\hbar}-1 \right) e^{- \frac{|z|^2}{2 \hbar}} .
\label{eqExampleFinal6}
\end{equation}
A simple calculation shows that $\mathcal{F}_{\sigma}(F)=- F$ and that
\begin{equation}
 \operatorname*{Cov}(F)= 3 \hbar I, \hspace{0.7 cm} \operatorname*{Cov}(|\widetilde{F}|^2) = \operatorname*{Cov}(|\mathcal{F}_{\sigma} \widetilde{F}|^2) = \frac{3 \hbar}{2}I, \hspace{0.7 cm} \mathcal{P}\left[F \right]= \frac{1}{2}.
\label{eqExampleFinal7}
\end{equation}
We conclude that $F$ satisfies the refined RSUP (\ref{eqERSUP19}).

However, this is not a Wigner function. To see this consider the ground state of the simple harmonic oscillator:
\begin{equation}
F_0(z)= \frac{1}{\pi \hbar} e^{-\frac{|z|^2}{\hbar}}.
\label{eqExampleFinal8}
\end{equation}
We have:
\begin{equation}
\int_{\mathbb{R}^2} F(z) F_0(z) dz = - \frac{\hbar}{9},
\label{eqExampleFinal9}
\end{equation}
which violates the positivity condition (iv) in Theorem \ref{TheoremReview1}.
\end{example}

\section{The Hirschman-Shannon inequality for Wigner functions}

In this section, we prove the entropic inequalities which appear as a by-product of the refined RSUP.

\begin{theorem}\label{Theorementropy1}
Let $W\rho$ be a Wigner function with purity $\mathcal{P}%
\left[W\rho\right]$ and finite covariance matrix $Cov(W\rho)$. Then $|\widetilde{W\rho
}|^{2}$ and $|\mathcal{F}_{\hbar}\widetilde{W\rho}|^{2}$ have finite
covariance matrices and entropies and the following inequalities hold:
\begin{equation}%
\begin{array}
[c]{c}%
\log\left[  (2\pi e)^{2n}\det\left(  Cov(W\rho)\right)  \right]  \geq\\
\\
\geq\log\left[  \left(  \pi e\mathcal{P}\left[W\rho\right]\right)  ^{2n}\sqrt{\det\left(
Cov(|\widetilde{W\rho}|^{2})\right)  \cdot\det\left(  Cov(|\mathcal{F}_{\hbar
}\widetilde{W\rho}|^{2})\right)  }\right]  \geq\\
\\
\geq2n\log\left(  \mathcal{P}\left[W\rho\right]\right)  +E\left(  |\widetilde{W\rho}%
|^{2}\right)  +E\left(  |\mathcal{F}_{\hbar}\widetilde{W\rho}|^{2}\right)
\geq\log\left(  \pi\hbar e\mathcal{P}\left[W\rho\right]\right)  ^{2n} .%
\end{array}
\label{eqIntroductionentropy26}%
\end{equation}
We have an equality throughout in (\ref{eqIntroductionentropy26}) if and only if
$W \rho =W \psi$ is a pure Gaussian of the form:
\begin{equation}
W\psi(z)=\frac{1}{(\pi\hbar)^{n}}e^{-\frac{1}{2}(z-z_{0})\cdot\left(
Cov(W\psi)\right)  ^{-1}(z-z_{0})}, \label{eqIntroduction28}%
\end{equation}
where $z_{0}\in\mathbb{R}^{2n}$ and
\begin{equation}
\frac{2}{\hbar}Cov(W\psi)\in Sp(n) \label{eqIntroductionentropy29}%
\end{equation}
is a $2n\times2n$ real symplectic matrix.
\end{theorem}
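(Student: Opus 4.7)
The plan is to prove the chain (\ref{eqIntroductionentropy26}) by establishing the three consecutive inequalities in turn — the first from the refined RSUP, the second from Shannon, the third from Hirschman — and then to verify the joint saturation case.

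For the first inequality, I would apply Corollary \ref{Corollary1}, an immediate consequence of the refined RSUP of Theorem \ref{TheoremERSUP2}, to obtain the Loewner inequality $\operatorname*{Cov}(W\rho)\geq \mathcal{P}[W\rho]\bigl(\operatorname*{Cov}(|\widetilde{W\rho}|^{2})+\tfrac{1}{4}\operatorname*{Cov}(|\mathcal{F}_{\sigma}\widetilde{W\rho}|^{2})\bigr)$ between positive semidefinite $2n\times 2n$ matrices. Taking determinants (which is monotone on the PSD cone), writing $X=\det\operatorname*{Cov}(|\widetilde{W\rho}|^{2})$ and $Y=\det\operatorname*{Cov}(|\mathcal{F}_\hbar\widetilde{W\rho}|^{2})=\det\operatorname*{Cov}(|\mathcal{F}_\sigma\widetilde{W\rho}|^{2})$ (the last equality because $\mathcal{F}_\sigma F(\zeta)=\mathcal{F}_\hbar F(J\zeta)$ and $\det J=1$), and then applying Minkowski's determinant inequality followed by the AM--GM bound $a+b\geq 2\sqrt{ab}$ — exactly as in the proof of Theorem \ref{TheoremSaturation} — one obtains $\det(\operatorname*{Cov}(|\widetilde{W\rho}|^{2})+\tfrac{1}{4}\operatorname*{Cov}(|\mathcal{F}_\sigma\widetilde{W\rho}|^{2}))\geq\sqrt{XY}$, hence $\det\operatorname*{Cov}(W\rho)\geq\mathcal{P}[W\rho]^{2n}\sqrt{XY}$; taking logarithms with the appropriate $(2\pi e)^{2n}$ factor then yields the first inequality.

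For the middle inequality I would apply Shannon's entropy bound (\ref{eqentropy3}) in dimension $2n$ separately to each of the probability densities $|\widetilde{W\rho}|^{2}$ and $|\mathcal{F}_\hbar\widetilde{W\rho}|^{2}$ and sum, giving $E(|\widetilde{W\rho}|^{2})+E(|\mathcal{F}_\hbar\widetilde{W\rho}|^{2})\leq\log[(2\pi e)^{2n}\sqrt{XY}]$; adding $2n\log\mathcal{P}[W\rho]$ to both sides produces the inequality between the second and third terms. For the last inequality I would apply the Hirschman--Beckner--Bialynicki-Birula--Mycielski inequality (\ref{eqentropy5}) in dimension $2n$ to the wavefunction $\widetilde{W\rho}\in L^{2}(\mathbb{R}^{2n})$, viewed as a $2n$-dimensional wavefunction in the sense of property (3) of the refined RSUP; this gives $E(|\widetilde{W\rho}|^{2})+E(|\mathcal{F}_\hbar\widetilde{W\rho}|^{2})\geq\log(\pi\hbar e)^{2n}$, and adding $2n\log\mathcal{P}[W\rho]$ closes the chain. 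Along the way, the finiteness of the covariances and entropies of $|\widetilde{W\rho}|^{2}$ and $|\mathcal{F}_\hbar\widetilde{W\rho}|^{2}$ under the hypothesis $\operatorname*{Cov}(W\rho)<\infty$ (together with $W\rho\in\mathcal{A}(\mathbb{R}^{2n})$ as required by Theorem \ref{TheoremERSUP2}) must be noted, and this follows from Shannon's upper bound combined with standard lower bounds for the differential entropy.

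The saturation analysis is the subtlest part. Hirschman and Shannon each saturate iff the underlying densities are Gaussian, which combined with the realness of $W\rho$ forces $W\rho$ itself to be a real Gaussian; the equality case of the first inequality of Theorem \ref{TheoremERSUP2} then forces the state to be pure, so $W\rho=W\psi$. The Minkowski and AM--GM steps in Step~1 saturate precisely when $\operatorname*{Cov}(|\widetilde{W\psi}|^{2})=\tfrac{1}{4}\operatorname*{Cov}(|\mathcal{F}_\sigma\widetilde{W\psi}|^{2})$; a direct Gaussian computation gives $\operatorname*{Cov}(|\widetilde{W\psi}|^{2})=\Sigma/2$ and $\operatorname*{Cov}(|\mathcal{F}_\sigma\widetilde{W\psi}|^{2})=\tfrac{\hbar^{2}}{2}(J^{T}\Sigma J)^{-1}$ with $\Sigma=\operatorname*{Cov}(W\psi)$, so the identity reduces, after setting $S=(2/\hbar)\Sigma$, to the symplectic condition $S^{T}JS=J$ — i.e.\ to $\tfrac{2}{\hbar}\Sigma\in Sp(n)$, which is exactly Littlejohn's theorem (Remark \ref{RemarkLittlejohn}) and is automatic for any pure Gaussian Wigner function of the form (\ref{eqIntroduction28}). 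The main obstacle lies in simultaneously reconciling the saturation cases of the five distinct inequalities (refined RSUP, Minkowski, AM--GM, Shannon, Hirschman) and ensuring they are all forced by, and collectively imply, the pure Gaussian structure with the Littlejohn symplectic constraint.
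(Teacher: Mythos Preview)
Your approach is essentially the paper's own: both use the first inequality of Corollary \ref{Corollary1}, Minkowski's determinant inequality together with AM--GM (the paper phrases the latter as ``concavity of the logarithm''), Shannon's bound (\ref{eqentropy3}) in dimension $2n$, and Hirschman's inequality (\ref{eqentropy5}) applied to $\widetilde{W\rho}$. The order of presentation differs slightly --- the paper packages the Shannon and Hirschman steps together via the ready-made double inequality (\ref{eqentropy6}) --- but the ingredients and logic are the same.

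There is one imprecision in your saturation argument. You write that ``the equality case of the first inequality of Theorem \ref{TheoremERSUP2} then forces the state to be pure''. But saturation of the first inequality in (\ref{eqIntroductionentropy26}) is a scalar (determinant) equality, whereas the purity criterion in Theorem \ref{TheoremERSUP2} concerns equality in the \emph{complex} matrix inequality $\operatorname*{Cov}(W\rho)+\tfrac{i\hbar}{2}J\geq\mathcal{P}[\cdots+\tfrac{i\hbar}{2}J]$; these are not the same statement and the implication is not immediate. The paper sidesteps this by a direct computation: once Hirschman forces $W\rho$ to be a real Gaussian with covariance $A$, one has $\mathcal{P}[W\rho]=(\hbar/2)^n/\sqrt{\det A}$, and equating the first and last terms of (\ref{eqIntroductionentropy26}) then yields $\det A=(\hbar/2)^{2n}$ and $\mathcal{P}=1$ simultaneously. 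Your route can be repaired too: determinant equality together with the Loewner order forces the real matrix identity $\operatorname*{Cov}(W\rho)=\mathcal{P}\bigl[\operatorname*{Cov}(|\widetilde{W\rho}|^2)+\tfrac14\operatorname*{Cov}(|\mathcal{F}_\sigma\widetilde{W\rho}|^2)\bigr]$, and combining this with your AM--GM/Minkowski saturation condition $\operatorname*{Cov}(|\widetilde{W\rho}|^2)=\tfrac14\operatorname*{Cov}(|\mathcal{F}_\sigma\widetilde{W\rho}|^2)$ gives $A=\mathcal{P}\cdot A$, hence $\mathcal{P}=1$. Either way the conclusion stands; just don't invoke the complex matrix equality of Theorem \ref{TheoremERSUP2} directly.
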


\begin{proof}
\vspace{0.3 cm}
From (\ref{eqentropy6}) with $n \to 2n$ and $f=\widetilde{W\rho}$, we
obtain:
\begin{equation}%
\begin{array}
[c]{c}%
\log\left(  \pi\hbar e\right)  ^{2n}\leq E\left(  |\widetilde{W\rho}%
|^{2}\right)  +E\left(  |\mathcal{F}_{\hbar}\widetilde{W\rho}|^{2}\right)
\leq\\
\\
\leq\log\left[  (2\pi e)^{2n}\sqrt{\det\left(  Cov(|\widetilde{W\rho}%
|^{2})\right)  \cdot\det\left(  Cov(|\mathcal{F}_{\hbar}\widetilde{W\rho}%
|^{2})\right)  }\right] .
\end{array}
\label{eqentropy21}%
\end{equation}

The first inequality in (\ref{eqERSUP18.4.B}) and Minkowski's determinant theorem yield
\begin{equation}%
\begin{array}
[c]{c}%
\det\left(  Cov(W \rho)\right)  \ge\left(  \mathcal{P} \left[W \rho\right] \right)  ^{2n}
\det\left[  Cov \left(  |\widetilde{W \rho}|^{2}\right)  + \frac{1}{4} J^{T}
Cov \left(  |\mathcal{F}_{\hbar} \widetilde{W \rho}|^{2}\right)  J\right]
\ge\\
\\
\ge\left(  \mathcal{P} \left[W \rho \right] \right)  ^{2n} \left[  \det^{\frac{1}{2n}}
\left(  Cov \left(  |\widetilde{W \rho}|^{2}\right)  \right)  + \frac{1}{4}
\det^{\frac{1}{2n}} \left(  Cov \left(  |\mathcal{F}_{\hbar} \widetilde{W
\rho}|^{2}\right)  \right)  \right]  ^{2n} .%
\end{array}
\label{eqProof3}%
\end{equation}
From the concavity of the logarithm and (\ref{eqentropy21}):
\begin{equation}%
\begin{array}
[c]{c}%
log \left(  \det\left(  Cov(W \rho)\right)  \right)  \ge2n \log\left(  2
\mathcal{P} \left[W \rho\right] \right)  +\\
\\
+2n \log\left[  \frac{1}{2} \det^{\frac{1}{2n}} \left(  Cov \left(
|\widetilde{W \rho}|^{2}\right)  \right)  + \frac{1}{8} \det^{\frac{1}{2n}}
\left(  Cov \left(  |\mathcal{F}_{\hbar} \widetilde{W \rho}|^{2}\right)
\right)  \right]  \ge\\
\\
\ge2n \log\left(  \mathcal{P} \left[W \rho\right] \right)  + \frac{1}{2} \log\left(  Cov
\left(  |\widetilde{W \rho}|^{2}\right)  \right)  + \frac{1}{2} \log\left(
Cov \left(  |\mathcal{F}_{\hbar} \widetilde{W \rho}|^{2}\right)  \right)
\ge\\
\\
\ge2n \log\left(  \mathcal{P} \left[W \rho\right] \right)  + E \left(  |\widetilde{W
\rho}|^{2}\right)  + E \left(  |\mathcal{F}_{\hbar} \widetilde{W \rho}%
|^{2}\right)  - \log(2 \pi e)^{2n},
\end{array}
\label{eqProof4}%
\end{equation}
and the result follows.

Finally, suppose we have an equality throughout (\ref{eqIntroductionentropy26}). By
Hirschman's Theorem, the last inequality becomes an equality if and only if $W
\rho$ is a generalized Gaussian. But since, $W \rho$ is a real normalized
function, it must be of the form:
\begin{equation}
W \rho(z) = \frac{1}{(2 \pi)^{n} \sqrt{\det A}} e^{- \frac{1}{2} (z-z_{0})
\cdot A^{-1} (z-z_{0})}, \label{eqProof5}%
\end{equation}
with $A$ a real, symmetric, positive-definite $2n \times2n $ matrix. By
standard Gaussian integral computations, we conclude that:
\begin{equation}
\mathcal{P} \left[W \rho\right]= \left(  \frac{\hbar}{2} \right)  ^{n} \frac{1}%
{\sqrt{\det A}}, \hspace{1 cm} Cov (W \rho)=A. \label{eqProof6}%
\end{equation}
Moreover,
\begin{equation}%
\begin{array}
[c]{l}%
\widetilde{W \rho} (z)= \frac{1}{\pi^{n/2} \sqrt[4]{\det A}} e^{- \frac{1}{2}
(z-z_{0}) \cdot A^{-1} (z-z_{0})},\\
\\
\left(  \mathcal{F}_{\hbar} \widetilde{W \rho} \right)  (\zeta)=
\frac{\sqrt[4]{\det A}}{\pi^{n/2} \hbar^{n}} e^{- \frac{1}{2 \hbar^{2}}
\zeta\cdot A \zeta- \frac{i}{\hbar} \zeta\cdot z_{0}}.
\end{array}
\label{eqProof7}%
\end{equation}
From which we conclude that
\begin{equation}
Cov \left(  |\widetilde{W \rho}|^{2}\right)  = \frac{1}{2} A , \hspace{1 cm}
Cov \left(  |\mathcal{F}_{\hbar} \widetilde{W \rho}|^{2}\right)  = \frac{1}{2
\hbar^{2}} A^{-1}. \label{eqProof8}%
\end{equation}

If we have an equality throughout (\ref{eqIntroductionentropy26}), then we also have
an equality in (\ref{eqProof3}). By Minkowski's determinant theorem that can
happen if and only if, there exists a constant $\alpha\ge0$, such that
\begin{equation}
Cov \left(  |\widetilde{W \rho}|^{2}\right)  = \alpha J^{T} Cov \left(
|\mathcal{F}_{\hbar} \widetilde{W \rho}|^{2}\right)  J \label{eqProof9}%
\end{equation}
Plugging (\ref{eqProof8}) into (\ref{eqProof9}) yields:
\begin{equation}
A= \frac{\alpha}{\hbar^{2}} J^{T} A^{-1} J \Leftrightarrow A J A =
\frac{\alpha}{\hbar^{2}} J .
\label{eqProof10}%
\end{equation}
In other words: $A$ is proportional to a symplectic matrix.

Equating the first and the last term in (\ref{eqIntroductionentropy26}), we obtain:
\begin{equation}
\det\left(  Cov (W \rho) \right)  = \left(  \frac{\hbar}{2} \right)  ^{2n}
\mathcal{P}^{2n} (W \rho) .
\label{eqProof11}%
\end{equation}
From (\ref{eqProof6}) and (\ref{eqProof11}), we conclude that:
\begin{equation}
\det A = \left(  \frac{\hbar}{2} \right)  ^{2n}, \hspace{1 cm} \mathcal{P} \left[W
\rho \right]=1. \label{eqProof12}%
\end{equation}
which proves the result.
\end{proof}

Another consequence of the refined RSUP is the following corollary for pure states.

\begin{corollary}
\label{Corollary3} Suppose that the Wigner function $W \psi$ has a finite
covariance matrix. Then $|\widetilde{W \psi}|^{2}$ has a finite covariance
matrix and a finite entropy and we have:
\begin{equation}%
\begin{array}
[c]{c}%
\log\left[  (2\pi e)^{n} \sqrt{\det\left(  Cov (W \psi) \right)  }\right]
\ge\\
\\
\ge\log\left[  \left(  2\pi e \right)  ^{n} \sqrt{\det\left(  Cov
(|\widetilde{W \psi}|^{2}) \right)  } \right]  \ge\\
\\
\ge E \left(  |\widetilde{W \psi}|^{2}\right)  \ge\log\left(  \frac{ \pi\hbar
e}{2} \right)  ^{2n} .%
\end{array}
\label{eqIntroductionentropy30}%
\end{equation}
\end{corollary}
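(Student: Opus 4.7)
The plan is to read off the three inequalities in (\ref{eqIntroductionentropy30}) from three distinct ingredients already available: the second matrix inequality of Corollary \ref{Corollary1} (which gives the first, purely covariance-theoretic inequality), the Shannon inequality (\ref{eqentropy3}) applied in dimension $2n$ (the middle inequality), and Lieb's entropic inequality for pure-state Wigner functions (the last inequality). The purity is used only through the fact that $\mathcal{P}[W\psi]=1$ for a pure state.

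First I would specialize the second inequality of Corollary \ref{Corollary1} to the pure case. With $\mathcal{P}[W\psi]=1$ this gives the matrix relation
\begin{equation}
\operatorname*{Cov}(W\psi)\;\ge\;\operatorname*{Cov}(|\widetilde{W\psi}|^{2})\;\ge\;0,
\label{eqPlanA}
\end{equation}
so in particular $\operatorname*{Cov}(|\widetilde{W\psi}|^{2})$ is finite. Since the determinant is monotone on the cone of positive-semidefinite matrices, (\ref{eqPlanA}) yields $\det\operatorname*{Cov}(W\psi)\ge\det\operatorname*{Cov}(|\widetilde{W\psi}|^{2})$, and taking $\tfrac12\log[(2\pi e)^{2n}\,\cdot\,]$ of both sides produces the first inequality of (\ref{eqIntroductionentropy30}).

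Next I would invoke the Shannon inequality (\ref{eqentropy3}) applied to the probability density $|\widetilde{W\psi}|^{2}$ on $\mathbb{R}^{2n}$ (it is indeed a probability density since $\widetilde{W\psi}$ is normalized in $L^{2}$). This immediately gives
\begin{equation}
E(|\widetilde{W\psi}|^{2})\;\le\;\tfrac{1}{2}\log\!\left[(2\pi e)^{2n}\det\operatorname*{Cov}(|\widetilde{W\psi}|^{2})\right]
\;=\;\log\!\left[(2\pi e)^{n}\sqrt{\det\operatorname*{Cov}(|\widetilde{W\psi}|^{2})}\right],
\label{eqPlanB}
\end{equation}
which is the middle inequality. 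At this stage $|\widetilde{W\psi}|^{2}$ has finite variance and its entropy is bounded from above; together with the lower bound supplied in the next step, $E(|\widetilde{W\psi}|^{2})$ is therefore finite, as claimed.

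Finally, the last inequality $E(|\widetilde{W\psi}|^{2})\ge\log(\pi\hbar e/2)^{2n}$ is exactly Lieb's entropic inequality \cite{Lieb} for the Wigner distribution of a pure state (written here with the $L^{2}$-normalization $\widetilde{W\psi}=W\psi/\|W\psi\|_{L^{2}}$, which is the natural rescaling for a pure state where $(2\pi\hbar)^{n}\|W\psi\|_{L^{2}}^{2}=1$). Assembling (\ref{eqPlanA}), (\ref{eqPlanB}) and Lieb's bound yields (\ref{eqIntroductionentropy30}). The only genuinely non-routine ingredient is Lieb's inequality; all the other steps are monotonicity of $\det$ on positive matrices and a direct appeal to Shannon, so I do not foresee a technical obstacle beyond citing the already stated results.
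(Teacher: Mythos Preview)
Your proposal is correct and follows essentially the same route as the paper's own proof: Lieb's inequality for the last step, Shannon's inequality (\ref{eqentropy3}) in dimension $2n$ for the middle one, and Corollary~\ref{Corollary1} together with monotonicity of the determinant for the first. The only cosmetic difference is that the paper cites the \emph{first} inequality in (\ref{eqERSUP18.4.B}) rather than the second, but since the second is an immediate weakening of the first, the arguments are effectively identical.
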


\begin{proof}
The last inequality in (\ref{eqIntroductionentropy30}) is a well known result by E.
Lieb \cite{Lieb}. The penultimate inequality is just Shannon's inequality
(\ref{eqentropy3}). In remains to prove the first inequality. But again from
the first inequality in (\ref{eqERSUP18.4.B}), we conclude that
\begin{equation}
\det\left(  Cov (W \psi) \right)  \ge\det\left(  Cov (| \widetilde{W \psi
}|^{2} ) \right)  ,
\end{equation}
and the result follows.
\end{proof}

\begin{remark}
\label{Remark2} Notice that the previous results are mainly interesting if the
state $W\rho$ does not depart appreciably from a pure state, that is if
$\mathcal{P}\left[W\rho \right]\approx1$. This is of course true if we have exactly a pure
state as in (\ref{eqIntroductionentropy30}). If $W\rho$
is highly mixed $\mathcal{P}\left[W\rho \right]\approx0$, then $\log\left(  \mathcal{P}%
\left[W\rho \right]\right)  \rightarrow-\infty$, and inequality (\ref{eqIntroductionentropy26})
becomes trivially true.
\end{remark}

\begin{remark}
Before we proceed let us make a brief comment on the choice of Fourier transform in the various inequalities.
In the refined RSUP we chose the symplectic Fourier transform in order to have a simpler expression. Otherwise,
we would have to make the replacement
\begin{equation}
Cov \left(|\mathcal{F}_{\sigma} (\widetilde{W \rho})|^2 \right) =J^T  Cov \left(|\mathcal{F}_{\hbar} (\widetilde{W \rho})|^2 \right) J .
\label{eqentropy26.1}%
\end{equation}
Because of this identity, the determinants of the two covariance matrices coincide. Likewise, we can easily show that $E \left( |\mathcal{F}_{\sigma} (\widetilde{W \rho})|^2\right)= E \left( |\mathcal{F}_{\hbar} (\widetilde{W \rho})|^2\right)  $. Consequently, (\ref{eqentropy26}) holds whether we use $|\mathcal{F}_{\sigma} (\widetilde{W \rho})|^2$ or $|\mathcal{F}_{\hbar} (\widetilde{W \rho})|^2$. We picked $|\mathcal{F}_{\hbar} (\widetilde{W \rho})|^2$ because we can then compare it directly with the Hirschman inequality. But this is really just a question of taste.
\end{remark}

\section{Outlook}

The Wigner quasi-distribution plays a central role in both time-frequency
analysis and quantum mechanics (from which it originates). One should however
be aware that it is not the only possible choice. Any element of the so-called
Cohen class \cite{Gro} having the correct marginals is a priory an
equally good choice in entropic questions of the type considered in this paper
(even if the Wigner quasi-distribution is well-adapted when symplectic
symmetries are present). It would for instance be interesting to generalize
our results to a particular element of the Cohen class, namely the
Born--Jordan distribution \cite{Springer} which is closely related to the
eponymous quantization procedure, and which has certain advantages compared to
those of the Wigner quasi-distribution (in particular it damps certain
unwanted interference effects \cite{cogoni}). We hope to come back to this
case in a near future.

\section*{Acknowledgements}

The work of N.C. Dias and J.N. Prata is supported by the COST Action 1405 and
by the Portuguese Science Foundation (FCT) grant PTDC/MAT-CAL/4334/2014. M. de
Gosson has been funded by the grant P27773 of the Austrian Research agency FWF.

\pagebreak

**********************************************************************************************************************************************************************************************************

\textbf{Author's addresses:}

\begin{itemize}
\item \textbf{Nuno Costa Dias and Jo\~ao Nuno Prata: }Escola Superior N\'autica Infante D. Henrique. Av.
Eng. Bonneville Franco, 2770-058 Pa\c{c}o d'Arcos, Portugal and Grupo de F\'{\i}sica
Matem\'{a}tica, Departamento de Matem\'atica, Faculdade de Cí\^encias, Universidade de Lisboa, Campo Grande, Edif\'{\i}cio C6, 1749-016 Lisboa, Portugal

\item \textbf{Maurice A. de Gosson:} Universit\"{a}t Wien, Fakult\"{a}t
f\"{u}r Mathematik--NuHAG, Nordbergstrasse 15, 1090 Vienna, Austria
\end{itemize}

**********************************************************************************************************************************************************************************************************

\end{document}